\newtheorem{proposition}{Proposition}
\newtheorem{lemma}{Lemma}
\newtheorem{theorem}{Theorem}
\newtheorem{remark}{Remark}
\newtheorem{definition}{Definition}
\begin{document}
%
\title{Wavelet-based Estimator for the Hurst Parameters of Fractional Brownian Sheet}
%
%
%

\author{Liang~Wu
        and~Yiming~Ding
\thanks{This work was supported in part by the National Basic Research Program of China (973 Program, grant No.2013CB910200, and grant No.2011CB707802).}
\thanks{Liang~Wu is with Wuhan Institute of Physics and Mathematics, Chinese Academy of Sciences, Wuhan 430071, China, and also with University of Chinese Academy of Sciences, Beijing 100049, China (e-mail: wuliangshine@gmail.com).}
\thanks{Yiming~Ding is with Key Laboratory of Magnetic Resonance in Biological Systems, Wuhan Institute of Physics and Mathematics, Chinese Academy of Sciences, Wuhan 430071, China (e-mail: ding@wipm.ac.cn).}
}

\maketitle

\begin{abstract}
It is proposed a class of statistical estimators $\hat H =(\hat H_1, \ldots, \hat H_d)$ for the Hurst parameters $H=(H_1, \ldots, H_d)$ of fractional Brownian field via multi-dimensional wavelet analysis and least squares, which are asymptotically normal.
These estimators can be used to detect self-similarity and long-range dependence in multi-dimensional signals, which is important in texture classification and improvement of diffusion tensor imaging (DTI) of nuclear magnetic resonance (NMR).
Some fractional Brownian sheets will be simulated and the simulated data are used to validate these estimators.
We find that when $H_i \geq 1/2$, the estimators are efficient, and when $H_i < 1/2$, there are some bias.
\end{abstract}

\begin{IEEEkeywords}
Detection of long-range dependence, Self-similarity, Hurst parameters, Wavelet analysis, Fractional Brownian sheet
\end{IEEEkeywords}

%
\IEEEpeerreviewmaketitle

\section{Introduction}

In numerous fields such as hydrology, biology, telecommunication and economics, the data available for analysis usually have scaling behavior (or long-range dependence and self-similarity) that need to be detected.
The key point for detecting scaling behavior is the estimation of the Hurst parameters $H$, which are used in characterizing self-similarity and long-range dependence.
The literature on the topic is vast (see e.g. \cite{FBMwave2,FBMwave1,FBMwave3,FBMwave4,FBMwave5,FBMwave6,FBMwave7,FBMwave8,Ding2014}).
Most of the published research concerns the scaling behavior in 1D data.
However, it is also important to detect the scaling behavior in multi-dimensional signals, which is important in texture classification and improvement of diffusion tensor imaging (DTI) of nuclear magnetic resonance (NMR).
Moreover, many multidimensional data from various scientific areas also have anisotropic nature in the sense that they have different geometric and probabilistic along different directions.
The study of long-range dependence and self-similarity in multidimensional case is usually based on the model of anisotropy fractional Gaussian fields such as fractional Brownian sheet (fBs) \cite{FBSdef1,Pesquet2002}, operator scaling Gaussian random field (OSGRF) \cite{Bierme2007} and extended fractional Brownian field (EFBF) \cite{Bonami2003}.

We focus on fBs, which plays an important role in modeling anisotropy random fields with self-similarity and long-range dependence (see e.g. \cite{Doukhan2003book,Pesquet2002,FBSstu3}) since it was first introduced by Kamont \cite{FBSdef1}.
Fractional Brownian sheet has arisen in the study of texture analysis for classification \cite{Atto2013,Roux2013}.
Besides research on the texture analysis, fractional Brownian sheets can also be used to drive stochastic partial differential equations \cite{book2}.
Moreover, Diffusion-tensor images (DTI) of nuclear magnetic resonance (NMR) in brain is now based on three-dimensional Brownian motion\cite{Moseley2002} $H_1=H_2=H_3$.
Compared with this model, fractional Brownian sheet may be a better model because it can describe the long-range dependence, which may exist in brain.

The main purpose of this paper is to estimate the Hurst parameters of fractional Brownian sheets via wavelet analysis.
Wavelet analysis has been an efficient tool for the estimation of Hurst parameter, but most of the published research on the topic focus on the 1D case.
Wavelet-based estimators for the parameter of single-parameter scaling processes (self-similarity and long-range dependence) have been proposed and studied by Abry et al. (see \cite{FBMwave2,FBMwave1,FBMwave3,FBMwave4,FBMwave5,FBMwave6,FBMwave7,FBMwave8}).
Such estimators are based on the wavelet expansion of the scaling process and a regression on the log-variance of wavelet coefficient.
It was usually assumed that wavelet coefficients are uncorrelated within and across octaves.
Bardet \cite{Bardet2002} and Morales \cite{Morales2002thesis} removed this assumption and proved the asymptotic normality of these estimators in the case of fractional Brownian motion (fBm), proposed a two-step estimator, which has the lowest variance in all of the least square estimators.
Compared with other estimators, such as the R/S method, the periodogram, the maximum-likelihood estimation, and so on, the wavelet-based estimator performs better in both the statistical and computational senses, and is superior in robustness (see \cite{FBMwave5,FBMwave6,FBMwave4} and the references therein).
It not only has small bias and low variance but also leads to a simple, low-cost, scalable algorithm.
Besides, the wavelet-based method can also eliminate some trends (linear trends, polynomial trend, or more) by the property of its vanishing moments \cite{FBMwave4}, which makes the estimator robust to some nonstationarities.

The paper is organized as follows.
In Section \ref{Sec:preliminaries}, some properties of Gaussian random variables and some limiting theorems used in this paper are introduced.
In Section \ref{Sec:estimator}, we obtain some properties of wavelet coefficient of fBs (see Proposition \ref{Prop:coef}).
Based on these properties, A class of estimators $\hat H$ for the Hurst parameter of fBs are proposed (see Section \ref{Sec:estiH}).
We also prove that these estimators $\hat H$ are asymptotically normal following the approach in \cite{Morales2002thesis} (see Theorem \ref{Thm:Hmain} and Section \ref{Sec:Hmainproof}).
Using the two-step procedure that proposed by Bardet \cite{Bardet2000}, we realize the generalized least squares (GLS) estimator, which has the lowest variance of $\hat H$ (see Section \ref{Sec:twostepesti}).
Finally, this two-step estimator $\hat H_{og}$ is applied to the synthetic fBs generated by the method of circulant embedding in the two-dimensional case and three-dimensional case (see Section \ref{Sec:simulation}).


\section{Preliminaries}\label{Sec:preliminaries}

\begin{definition}
Given a vector $H=(H_1, \ldots, H_d) \in (0,1)^d$, a fractional Brownian sheet (fBs) $B^H=\{ B^H(t), t\in \mathbb{R}^d_+ \}$ with Hurst parameter $H$ is a real-valued mean-zero Gaussian random fields with covariance function given by
\begin{eqnarray}\label{Eq:BHcov}
E(B^H (s)B^H (t)) = \frac{1}{{2^d }}\prod\limits_{i = 1}^d {(|s_i |^{2H_i }  + |t_i |^{2H_i }  - |s_i  - t_i |^{2H_i } )} , s,t \in \mathbb{R}^d_+.
\end{eqnarray}
\end{definition}
It follows that $B^H$ is an anisotropic Gaussian random field.

Note that if $d=1$, then $B^H$ is a fractional Brownian motion (fBm) with Hurst parameter $H_1\in (0,1)$, which as a moving-average Gaussian process is introduced by Mandelbrot and Van Ness \cite{FBMdef}. If $d>1$ and $H_1= \cdots = H_d =1/2$, then $B^H$ is the Brownian sheet.

Moreover, $B^H$ has a continuous version for all $H=(H_1, \ldots, H_d) \in (0,1)^d$ and is self-similar in the sense that for any diagonal matrix $A = {\rm diag}(a_1,a_2,\ldots,a_d)$, where $a_i>0$ for all $1\leq i \leq d$, the random field $\{ B^H(At),t \in \mathbb{R}^d \}$ has the same probability law as $\{ a^H B^H (t),t \in \mathbb{R}^d \}$, where $a^H=a^{H_1}_1 \cdots a^{H_d}_d$ (see e.g. \cite{FBSdef2}). $B^H$ also has stationary increments with respect to each variable (see e.g. \cite{FBSsim2}).

First we will use some properties of Gaussian random variables and some limiting theorems.

\begin{lemma}\label{Lm:covXY2}
$(X, Y)$ are two-dimensional mean-zero normal variable, i.e., $(X,Y) \sim N(0,\sigma _1 ^2 ;0,\sigma _2 ^2 ;\rho )$, then
\begin{align*}
{\mathop{\rm cov}} (X^2 ,Y^2 ) = 2{\mathop{\rm cov}} ^2 (X,Y).
\end{align*}
\end{lemma}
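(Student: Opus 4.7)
The plan is to apply Isserlis' theorem (the Gaussian moment / Wick product formula) to the four-tuple $(X,X,Y,Y)$. For zero-mean jointly Gaussian random variables $X_1,X_2,X_3,X_4$, Isserlis' identity gives
\begin{equation*}
E(X_1X_2X_3X_4)=E(X_1X_2)E(X_3X_4)+E(X_1X_3)E(X_2X_4)+E(X_1X_4)E(X_2X_3).
\end{equation*}
Specializing to $X_1=X_2=X$ and $X_3=X_4=Y$ collapses the three pairings into $E(X^2)E(Y^2)+2E(XY)^2$. Substituting this into $\mathrm{cov}(X^2,Y^2)=E(X^2Y^2)-E(X^2)E(Y^2)$ immediately yields $2E(XY)^2=2\,\mathrm{cov}^2(X,Y)$, since $X$ and $Y$ have zero mean.

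If one prefers to avoid quoting Isserlis' theorem, the alternative I would use is the standard Gaussian regression trick: write $Y=\alpha X+\beta Z$ where $Z\sim N(0,1)$ is independent of $X$, with $\alpha=\rho\sigma_2/\sigma_1$ and $\beta^2=\sigma_2^2-\alpha^2\sigma_1^2$. Expanding $Y^2$ and using $E(X^4)=3\sigma_1^4$, $E(X^3Z)=0$, and $E(X^2Z^2)=\sigma_1^2$, a short computation gives $E(X^2Y^2)=2\alpha^2\sigma_1^4+\sigma_1^2\sigma_2^2=2\rho^2\sigma_1^2\sigma_2^2+\sigma_1^2\sigma_2^2$, so that $\mathrm{cov}(X^2,Y^2)=2\rho^2\sigma_1^2\sigma_2^2=2\,\mathrm{cov}^2(X,Y)$.

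There is essentially no obstacle here: both approaches are short and mechanical. The only thing worth being careful about is checking that the two-dimensional Gaussianity of $(X,Y)$, rather than just individual normality of $X$ and $Y$, is what makes Isserlis' theorem applicable — the identity would not hold without the joint Gaussian assumption built into the hypothesis $(X,Y)\sim N(0,\sigma_1^2;0,\sigma_2^2;\rho)$.
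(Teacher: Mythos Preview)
Your proposal is correct. Your second approach---writing $Y=\alpha X+\beta Z$ with $Z$ independent of $X$ and expanding $E(X^2Y^2)$---is exactly the paper's method with the roles of $X$ and $Y$ interchanged: the paper sets $Z=aX-Y$ so that $Z$ is independent of $Y$, then expands $X^2=(Z+Y)^2/a^2$ and computes $E(X^2Y^2)$ using independence and $E(Y^4)=3\sigma_2^4$. Your first approach via Isserlis' theorem is a genuinely shorter route: it replaces the explicit orthogonal decomposition and fourth-moment bookkeeping by a single identity, at the cost of invoking a named result rather than computing from scratch. Both yield the same conclusion, and your remark that joint Gaussianity (not just marginal normality) is essential applies equally to the paper's argument, since the independence of $Z$ and $Y$ there relies on it.
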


\begin{proof}
  Assume ${\rm cov}(X,Y)=v$.
  Let $Z= aX-Y$ s.t. ${\rm cov}(Z,Y)=0$,
  So we have $a=\sigma^2_2 / v$.
  $$\mathbb{E} Z^2 = a^2 \mathbb{E}X^2 + \mathbb{E}Y^2 -2a\mathbb{E}XY = a^2 \sigma^2_1 + \sigma^2_2 - 2av.$$
 \begin{align*}
  \mathbb{E}X^2Y^2 &=\frac{1}{a^2}\mathbb{E}Z^2Y^2 + \frac{2}{a}\mathbb{E}ZY^3 + \frac{1}{a^2}\mathbb{E}Y^4 \\
                   &=\frac{1}{a^2}\mathbb{E}Z^2Y^2 + \frac{2}{a}\mathbb{E}Z\mathbb{E}Y^3 + \frac{1}{a^2}\mathbb{E}Y^4.
  \end{align*}
 It's easy to check $\mathbb{E}Z\mathbb{E}Y^3=0$. 
 Put $a$ and $\mathbb{E} Z^2$ into the above formula, we have $\mathbb{E}X^2Y^2 = \sigma^2_1\sigma^2_2 + 2 v^2$. Then ${\mathop{\rm cov}} (X^2 ,Y^2 ) = 2{\mathop{\rm cov}} ^2 (X,Y)$.
\end{proof}

\begin{definition}
Let $f:\mathbb{R}^d \to \mathbb{R}$. $X=(X^{(1)}, \ldots, X^{(d)} )'$ is a Gaussian vector, and $f(X)$ has finite second moment, we define the Hermite rank of $f$ with respect to $X$ as
\begin{align*}
{\rm rank}(f):={\rm inf} \{\tau: \exists \; \text{ploynomial $P$ of degree $\tau$ with} \; \mathbb{E} [ (f(X)- \mathbb{E}f(X)) P(X^{(1)}, \ldots, X^{(d)}) ] \neq 0 \}.
\end{align*}
\end{definition}

The following two lemmas have been proved in \cite{Morales2002thesis} (see Lemma 4 and Lemma 5 of \cite{Morales2002thesis}).
\begin{lemma}\label{Lm:rank1}
If $X \sim N(0,\sigma^2)$, then for $q>-1/2$, the rank of $f_q(X)=|X|^q$ is 2.
\end{lemma}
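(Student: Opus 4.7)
The plan is to check the definition directly for $\tau=0,1,2$ and exhibit a witness polynomial at $\tau=2$.

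For the degree-$0$ case, any polynomial $P\equiv c$ produces $\mathbb{E}[(f_q(X)-\mathbb{E}f_q(X))\cdot c]=0$ by centering, so no constant polynomial can witness a non-vanishing expectation. For degree $1$, write $P(X)=aX+b$; the constant part is killed by centering as above, while $\mathbb{E}[(|X|^q-\mathbb{E}|X|^q)X]=\mathbb{E}[|X|^q X]=0$ because $x\mapsto|x|^q x$ is odd and $X$ has a symmetric law. Hence the Hermite rank is at least $2$.

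For the degree-$2$ case I would simply test against $P(X)=X^2$ and show
\begin{align*}
\mathbb{E}\bigl[(|X|^q-\mathbb{E}|X|^q)\,X^2\bigr]=\mathbb{E}|X|^{q+2}-\sigma^2\,\mathbb{E}|X|^q\neq 0.
\end{align*}
The hypothesis $q>-1/2$ guarantees $q+2>-1$, so the absolute moment formula $\mathbb{E}|X|^p=\sigma^p 2^{p/2}\Gamma((p+1)/2)/\sqrt{\pi}$ applies to both terms. Using $\Gamma((q+3)/2)=\frac{q+1}{2}\Gamma((q+1)/2)$, the two moments collapse to
\begin{align*}
\mathbb{E}|X|^{q+2}-\sigma^2\,\mathbb{E}|X|^q = q\,\sigma^2\,\mathbb{E}|X|^q,
\end{align*}
which is strictly non-zero for all admissible $q$ (the tacitly excluded case $q=0$ just corresponds to $f_q\equiv 1$, for which the rank is not defined). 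Combined with the first paragraph, this shows the Hermite rank equals~$2$.

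There is no real obstacle here: the entire argument is a parity observation plus a one-line manipulation of the Gaussian absolute-moment formula via the functional equation of $\Gamma$. The only point to be slightly careful about is ensuring the moments $\mathbb{E}|X|^q$ and $\mathbb{E}|X|^{q+2}$ are finite and positive, which is precisely where the assumption $q>-1/2$ (rather than $q>-1$) is used with room to spare.
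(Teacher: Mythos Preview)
The paper does not actually prove this lemma; it simply cites Morales (2002, Lemma~4). Your direct verification is correct and self-contained: the parity argument disposes of degrees $0$ and $1$, and the Gamma-function identity yields the exact value $\mathbb{E}|X|^{q+2}-\sigma^2\mathbb{E}|X|^q=q\,\sigma^2\,\mathbb{E}|X|^q$, which witnesses rank~$2$ for $q\neq 0$.

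One small remark on your closing comment: the hypothesis $q>-1/2$ is really there to ensure $f_q(X)=|X|^q$ has finite \emph{second} moment (i.e., $\mathbb{E}|X|^{2q}<\infty$), as required by the definition of Hermite rank. The moments $\mathbb{E}|X|^q$ and $\mathbb{E}|X|^{q+2}$ appearing in your computation are already finite for $q>-1$, so you have correctly observed there is room to spare, but the place where the sharp threshold $-1/2$ enters is the square-integrability of $f_q$, not the moment formula itself. Your caveat about $q=0$ is valid and worth noting.
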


\begin{lemma}\label{Lm:rank2}
Let $h_i(x)=f_q(x)=|x|^q, q>-1/2$ for every $1 \leq i \leq d$.
Let $t_1,\ldots,t_d$ be $d$ real numbers, and $X=(X^{(1)}, \ldots, X^{(d)} )'$  a zero-mean Gaussian vector. Then
\begin{align*}
h(X)= \sum\nolimits_{i=1}^d {t_i h_i( X^{(i)} )}
\end{align*}
is a function $g:\mathbb{R}^d \to \mathbb{R}$ of Hermite rank 2.
\end{lemma}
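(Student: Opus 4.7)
The plan is to establish ${\rm rank}(h) \geq 2$ by a parity/symmetry argument and ${\rm rank}(h) \leq 2$ by explicitly producing a degree-$2$ polynomial $P$ with $\mathbb{E}[(h(X)-\mathbb{E}h(X))P(X)] \neq 0$, with the non-vanishing ultimately reducing to the univariate statement Lemma \ref{Lm:rank1}.

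For the lower bound, since $|x|^q$ is even in $x$, the function $h(x_1,\ldots,x_d) = \sum_i t_i |x_i|^q$ is invariant under $x \mapsto -x$, and hence so is $g(X) := h(X) - \mathbb{E}h(X)$. For any polynomial $P$ of degree $\leq 1$, decompose $P = P(0) + L$ with $L$ linear; the constant part contributes $P(0)\mathbb{E}g(X) = 0$, while $g(X)L(X)$ is an odd function of $X$ and integrates to zero against the symmetric Gaussian law of $X$. Hence $\mathbb{E}[g(X)P(X)] = 0$ for every $P$ of degree at most $1$, so ${\rm rank}(h) \geq 2$.

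For the upper bound, I would compute the degree-$2$ covariance matrix
\[
M_{ij} := \mathbb{E}\bigl[g(X)\,X^{(i)}X^{(j)}\bigr] = \sum_{k=1}^d t_k\,\mathbb{E}\bigl[(|X^{(k)}|^q - \mathbb{E}|X^{(k)}|^q)\,X^{(i)}X^{(j)}\bigr].
\]
For each term I would condition on $X^{(k)}$ by writing $X^{(i)} = \alpha_{ik} X^{(k)} + U_i$ and $X^{(j)} = \alpha_{jk} X^{(k)} + U_j$, with $\alpha_{ik} = \Sigma_{ik}/\sigma_k^2$ and $(U_i, U_j)$ a Gaussian pair independent of $X^{(k)}$. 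Expanding $X^{(i)}X^{(j)}$, the cross-terms $\alpha_{ik}X^{(k)}U_j$ and $\alpha_{jk}X^{(k)}U_i$ produce odd integrals in $X^{(k)}$ (odd times the even $|X^{(k)}|^q$) and so vanish, while the $U_iU_j$-term vanishes because $\mathbb{E}[|X^{(k)}|^q - \mathbb{E}|X^{(k)}|^q] = 0$. What survives is
\[
M_{ij} = \sum_k \frac{t_k\,c_k}{\sigma_k^4}\,\Sigma_{ik}\Sigma_{jk}, \qquad c_k := \mathbb{E}\bigl[(|X^{(k)}|^q - \mathbb{E}|X^{(k)}|^q)(X^{(k)})^2\bigr],
\]
which in matrix form reads $M = \Sigma D \Sigma$ with $D = \mathrm{diag}(t_k c_k / \sigma_k^4)$. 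By Lemma \ref{Lm:rank1}, each $c_k \neq 0$ (this is precisely the witness $P(x)=x^2$ for the univariate rank); since $h$ is not a.s.\ constant, some $t_k \neq 0$, so $D \neq 0$, and non-degeneracy of $\Sigma$ then forces $M \neq 0$ (the degenerate case reduces by restricting to the range of $\Sigma$). Any entry $M_{ij} \neq 0$ produces the desired degree-$2$ witness $P(x) = x_i x_j$, so ${\rm rank}(h) \leq 2$.

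The main obstacle is the upper bound: because the coordinates $X^{(i)}$ are correlated, one cannot just read off rank $2$ from applying Lemma \ref{Lm:rank1} term-by-term, since cross-covariances could a priori let the coefficients $t_k$ conspire to annihilate every degree-$2$ inner product simultaneously. The factorization $M = \Sigma D \Sigma$ is the key bookkeeping step, and it reduces this multivariate non-vanishing question to the $d$ scalar non-vanishings already supplied by Lemma \ref{Lm:rank1}.
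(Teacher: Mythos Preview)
The paper does not prove this lemma at all; it simply cites Morales (2002, Lemma~5) and states the result. So there is no ``paper's own proof'' to compare against.

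Your argument is correct and complete in all essentials. The lower bound via parity is the standard one-line observation. For the upper bound, the regression decomposition $X^{(i)} = \alpha_{ik}X^{(k)} + U_i$ and the resulting factorization $M = \Sigma D \Sigma$ are exactly the right bookkeeping device; as you note, this reduces the multivariate non-vanishing to the univariate quantities $c_k = \mathrm{Cov}(|X^{(k)}|^q,(X^{(k)})^2)$, and your inference that $c_k \neq 0$ follows from Lemma~\ref{Lm:rank1} is sound (in one dimension, any degree-2 witness must have a nonzero $x^2$ coefficient, since the constant and linear parts contribute zero by the same parity argument). With $\Sigma$ non-singular, $M = \Sigma D \Sigma \neq 0$ as soon as $D \neq 0$, which gives the degree-2 witness $P(x)=x_ix_j$.

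The only residual issues are degenerate edge cases that are really defects of the lemma as stated rather than of your proof: if $q=0$ then $|x|^q \equiv 1$ and $h$ is constant, and if all $t_i=0$ then $h \equiv 0$; in either case the Hermite rank is not $2$. These do not matter for the paper, which only invokes the lemma with $q=2$ and with coefficients that are manifestly not all zero. Your parenthetical about the degenerate-$\Sigma$ case is a bit terse but the idea (pass to a maximal non-degenerate subvector) is correct.
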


The following three limiting theorems have been proved in \cite{Basu2004book}, \cite{Arcones1994} and \cite{Serfling1980book} (see Theorem 9.0 of \cite{Basu2004book}, Theorem 4 of \cite{Arcones1994} and Theorem 3.3A of \cite{Serfling1980book}).

\begin{theorem}[\bf{Cram\'{e}r-Wold theorem}]\label{Thm:C-W}
Let $X_n=( X_n^{(1)} , \ldots, X_n^{(d)} )$ and $X=( X^{(1)}, \ldots, X^{(d)} )$ be random vectors. As $n \to +\infty$, $X_n \mathop\to\limits^d X$ if and only if:
\begin{align*}
\sum\nolimits_{i=1}^d {t_i X_n^{(i)}} \mathop\to\limits^d \sum\nolimits_{i=1}^d {t_i X^{(i)} }.
\end{align*}
for each $( t_1,\ldots,t_d ) \in \mathbb{R}^d$.
\end{theorem}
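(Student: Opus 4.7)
The plan is to reduce both implications to pointwise convergence of characteristic functions and then invoke L\'evy's continuity theorem in $\mathbb{R}^d$. Denote by $\phi_{X_n}(t) := \mathbb{E}[\exp(i \sum_{j=1}^d t_j X_n^{(j)})]$ and $\phi_X(t) := \mathbb{E}[\exp(i \sum_{j=1}^d t_j X^{(j)})]$ the joint characteristic functions. The key observation is that for any fixed $t = (t_1,\ldots,t_d) \in \mathbb{R}^d$, the value $\phi_{X_n}(t)$ coincides with the scalar characteristic function of $\sum_j t_j X_n^{(j)}$ evaluated at the point $1$, and likewise for $\phi_X(t)$.

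For the direct implication, I would appeal to the continuous mapping theorem: the linear functional $x \mapsto \sum_j t_j x^{(j)}$ is continuous from $\mathbb{R}^d$ to $\mathbb{R}$, so $X_n \mathop\to\limits^d X$ yields $\sum_j t_j X_n^{(j)} \mathop\to\limits^d \sum_j t_j X^{(j)}$ for every $t \in \mathbb{R}^d$.

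For the converse, the hypothesis says $\sum_j t_j X_n^{(j)} \mathop\to\limits^d \sum_j t_j X^{(j)}$ for every $t$; by the standard fact that convergence in distribution of real random variables implies pointwise convergence of their characteristic functions, evaluation at the scalar point $s=1$ gives $\phi_{X_n}(t) \to \phi_X(t)$ for each $t \in \mathbb{R}^d$. Since $\phi_X$ is automatically continuous at the origin (being the characteristic function of a genuine $\mathbb{R}^d$-valued random vector), L\'evy's continuity theorem for $\mathbb{R}^d$-valued random vectors then delivers $X_n \mathop\to\limits^d X$.

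The main obstacle, if there is one, is ensuring that the multivariate form of L\'evy's continuity theorem is legitimately applicable here, in particular the tightness argument that is absorbed into its statement; once that theorem is granted, the proof is essentially a two-line manipulation of characteristic functions and a single invocation of the continuous mapping theorem.
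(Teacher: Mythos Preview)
Your argument is correct and is in fact the standard textbook proof of the Cram\'er--Wold device: the forward direction via the continuous mapping theorem, and the converse via pointwise convergence of joint characteristic functions combined with L\'evy's continuity theorem in $\mathbb{R}^d$. There is no gap; the appeal to multivariate L\'evy is entirely legitimate since the limiting function $\phi_X$ is by hypothesis a genuine characteristic function (hence continuous at the origin), so tightness is automatic.

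As for comparison with the paper: the paper does not prove this theorem at all. It is listed among preliminary results and simply attributed to a reference (Theorem~9.0 of \cite{Basu2004book}). So you have supplied a proof where the paper offers only a citation; your proof is the classical one and would be what any of the cited references contain.
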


\begin{theorem}[\bf{Arcones, 1994}]\label{Thm:Arcones}
Let $\{ X_j \}_{j=1}^{\infty}$ be a stationary mean-zero Gaussian sequence of $\mathbb{R}^d$-valued vectors. Set $X_j=( X_j^{(1)}, \ldots, X_j^{(d)} )$. Let $f$ be a function on $\mathbb{R}^d$ with rank $\tau, 1 \leq \tau < \infty$. We define
\begin{align*}
r^{(i_1,i_2)}(k)=\mathbb{E}[X_m^{(i_1)} X_{m+k}^{(i_2)}],
\end{align*}
for $k\in \mathbb{Z}$, where $m$ is any large enough number such that $m,m+k \geq 1$. Suppose that
\begin{align}\label{Eq:ArconesR}
   \sum\nolimits_{k=-\infty}^{\infty} | r^{(i_1, i_2)}(k) |^{\tau} < \infty,
\end{align}
for each $1\leq i_1, i_2 \leq d$. Then
\begin{align}\label{Eq:ArconesN}
n^{-1/2}\sum\nolimits_{j=1}^{n}( f(X_j)-\mathbb{E}f(X_j) ) \mathop\to\limits^d N(0, \sigma^2),
\end{align}
where
\begin{align}\label{Eq:ArconesV}
\sigma^2 = {\rm Var} f(X_1) +2 \sum\nolimits_{k=1}^{\infty} {\rm Cov}(f(X_1), f(X_{1+k})).
\end{align}
\end{theorem}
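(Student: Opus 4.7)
This is the multivariate Breuer--Major central limit theorem for Hermite-rank-$\tau$ functionals of a stationary Gaussian sequence. My plan is to combine a Hermite polynomial expansion of $f$ with the method of cumulants, following the scheme by which the scalar Breuer--Major theorem is usually proved.

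I would first expand the centered function $\tilde f = f - \mathbb{E} f(X_1)$ in multivariate Hermite polynomials adapted to the law of $X_1$: writing $X_1 = \Sigma^{1/2} Y$ with $Y \sim N(0, I_d)$, one obtains $\tilde f(X_j) = \sum_{|\alpha| \geq \tau} c_\alpha H_\alpha(Y_j)$, and the expansion starts at total degree $\tau$ precisely because that is the Hermite rank of $f$. Next, to compute the asymptotic variance of $S_n = \sum_{j=1}^n \tilde f(X_j)$, orthogonality of the multivariate Hermite polynomials together with the Isserlis (diagram) formula reduces each covariance $\mathrm{Cov}(\tilde f(X_j), \tilde f(X_k))$ to a sum of products of at least $\tau$ correlations $r^{(i_1, i_2)}(j-k)$; the hypothesis \eqref{Eq:ArconesR} then gives, by a Cesaro-type argument, $\mathrm{Var}(S_n) \sim n\sigma^2$ with $\sigma^2$ as in \eqref{Eq:ArconesV}, and in particular the series defining $\sigma^2$ converges absolutely.

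The main step is to upgrade this variance control to a full CLT via cumulants. For every $p \geq 3$, the $p$-th cumulant of $S_n$ is, by the diagram formula for products of Hermite polynomials, a sum over connected diagrams of products of correlations $r^{(i_a, i_b)}(j_a - j_b)$. Connectivity forces at least $p - 1$ correlation factors linking the $p$ "blocks" of $\tau$ vertices, and combined with the $\ell^\tau$-summability of each scalar correlation this yields $\kappa_p(S_n) = O(n)$, hence $\kappa_p(n^{-1/2} S_n) = O(n^{1 - p/2}) \to 0$. Convergence of the first two cumulants to $0$ and $\sigma^2$, together with vanishing of all higher cumulants, gives \eqref{Eq:ArconesN} by the method of moments (Gaussians are determined by their moments).

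The principal obstacle is the combinatorial bookkeeping in the cumulant bound once the Gaussian sequence is vector-valued: one must verify that letting the coordinate labels $i_1, i_2 \in \{1, \dots, d\}$ vary freely inside each diagram does not spoil the scalar counting argument. This is exactly what forces Arcones to impose \eqref{Eq:ArconesR} separately for every pair $(i_1, i_2)$ rather than only on the diagonal, and working this through carefully is what distinguishes the present vector-valued statement from the scalar Breuer--Major theorem.
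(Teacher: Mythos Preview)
The paper does not prove this theorem at all: it is quoted verbatim as Theorem~4 of Arcones~(1994) and simply invoked as a black box in the proofs of Lemmas~\ref{Lm:SqJ} and~\ref{Lm:aSq}. There is therefore no ``paper's own proof'' to compare your proposal against.

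That said, your sketch is a reasonable outline of how the vector-valued Breuer--Major theorem is proved in the original reference and in the subsequent literature (Hermite expansion, diagram formula, cumulant bounds). If you intend to include a self-contained proof, be aware that the cumulant step you describe as ``$\kappa_p(S_n)=O(n)$'' requires more care than you indicate: for $p\ge 3$ one must truncate the Hermite expansion at a finite level, control the tail in $L^2$, and then bound the cumulants of the truncated sum uniformly, which is where Arcones's argument (or the more recent fourth-moment/Malliavin approach of Nourdin--Peccati) does real work. For the purposes of the present paper, however, a citation suffices.
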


\begin{theorem}\label{Thm:funclimit}
Let $X_n=( X_n^{(1)}, \ldots, X_n^{(d)} )$. Suppose that as $n \to \infty$,
\begin{align*}
\frac{X_n-\mu}{b_n} \mathop\to\limits^d N(\mathbf{0},\Sigma),
\end{align*}
where $\Sigma$ is covariance matrix and $ b_n \to \infty$. Let $g(x) = (g_1(x), \ldots, g_m(x)), x=(x_1, \ldots, x_d)$, be a vector-valued function for which each component function $g_i(x)$ is real-valued and has a nonzero differential $g_i(\mu; t), t=(t_1, \ldots, t_d),$ at $x=\mu$.Put
\begin{align*}
D=\left[ \frac{\partial g_i}{\partial x_j} |_{x=\mu} \right]_{m\times d}.
\end{align*}
Then
\begin{align*}
\frac{ g(X_n)-g(\mu) }{b_n} \mathop\to\limits^d N(\mathbf{0}, D \Sigma D').
\end{align*}
\end{theorem}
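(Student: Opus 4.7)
The statement is the multivariate delta method. My plan is to linearize each component $g_i$ about $\mu$ by a first-order Taylor expansion and then combine the asymptotic normality of $X_n$ with Slutsky's theorem and the continuous mapping theorem. Since each $g_i$ has a nonzero differential at $\mu$, I can write $g_i(x) = g_i(\mu) + \nabla g_i(\mu)'(x-\mu) + r_i(x)$ with $r_i(x)/\|x-\mu\| \to 0$ as $x \to \mu$. Stacking the components produces the vector identity $g(X_n) - g(\mu) = D(X_n - \mu) + R(X_n)$, in which the linear map is exactly the Jacobian $D$ of the theorem statement.

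Next I would divide through by $b_n$ and treat the two pieces separately. For the linear piece, $D(X_n - \mu)/b_n \mathop\to\limits^d DZ$ with $Z \sim N(\mathbf{0},\Sigma)$ by the continuous mapping theorem; since $z \mapsto Dz$ is linear and $Z$ is Gaussian, the limit law is $N(\mathbf{0}, D\Sigma D')$. For the remainder piece, the hypothesis that $(X_n-\mu)/b_n$ is tight (it converges in distribution), together with the standard delta-method interpretation in which $X_n \to \mu$ in probability, gives $r_i(X_n)/\|X_n - \mu\| \mathop\to\limits^P 0$, so
\begin{align*}
\frac{R(X_n)}{b_n} \;=\; \frac{R(X_n)}{\|X_n - \mu\|} \cdot \frac{\|X_n - \mu\|}{b_n} \;=\; o_P(1) \cdot O_P(1) \;=\; o_P(1).
\end{align*}
Applying Slutsky's theorem to the identity $(g(X_n) - g(\mu))/b_n = D(X_n - \mu)/b_n + R(X_n)/b_n$ then delivers the claimed limit $N(\mathbf{0}, D \Sigma D')$.

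The main obstacle is precisely the control of $R(X_n)/b_n$: one must convert the purely analytic pointwise bound $r_i(x) = o(\|x-\mu\|)$ into a stochastic $o_P(1)$ statement after normalization by $b_n$. This is the reason the theorem insists on a nonzero first-order differential rather than mere continuity of $g$; a genuine linear approximation with a truly higher-order remainder is what lets one peel off the Gaussian part and discard the rest. Once the remainder is handled, the conclusion follows mechanically from the linearity of Gaussian transformations. Cram\'er-Wold (Theorem \ref{Thm:C-W}) could additionally be invoked if one prefers to reduce the vector convergence to the scalar case, but the argument above already gives the joint statement directly.
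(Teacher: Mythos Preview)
The paper does not supply its own proof of this theorem; it is quoted as a preliminary result with a citation to Serfling (Theorem~3.3A there). Your sketch is exactly the standard delta-method argument one finds in that reference: first-order Taylor expansion about $\mu$, linear part handled by the continuous mapping theorem to produce $N(\mathbf{0}, D\Sigma D')$, remainder shown to be $o_P(1)$ via $X_n \to \mu$ in probability combined with tightness of $(X_n-\mu)/b_n$, then Slutsky. So there is no divergence in approach to report.

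One technical remark worth flagging: the hypothesis as printed reads $b_n \to \infty$, but this is evidently a typo for $b_n \to 0$ (equivalently, $a_n(X_n-\mu)\mathop\to\limits^d N(\mathbf{0},\Sigma)$ with $a_n\to\infty$, which is how Serfling states it and how the paper actually applies it in Lemma~\ref{Lm:LSq} with $a_n=n_{m,c}^{1/2}$). Your argument tacitly uses the correct version when you invoke ``$X_n\to\mu$ in probability'' to make $r_i(X_n)/\|X_n-\mu\|\mathop\to\limits^P 0$; under the literal hypothesis $b_n\to\infty$ that consistency step fails and the remainder cannot be controlled. With the intended condition $b_n\to 0$ your proof is complete and matches the cited source.
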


\section{Wavelet-based estimator}\label{Sec:estimator}

\subsection{Wavelet coefficients of fractional Brownian sheet}\label{Sec:wavecoeff}

Let $\psi(t) \in L^2(\mathbb{R})$ be an orthonormal wavelet with compact support and have $N\geq1$ vanishing moments.
A wavelet $\psi(t)$ is said to have a number $N \geq 1$ of vanishing moments if
\begin{eqnarray}
\int {t^k \psi (t)dt}  \equiv 0,\quad k = 0,1,2, \ldots ,N - 1, \; \text{and}\;   \int {t^N \psi (t)dt} \neq 0.
\end{eqnarray}
The dilations and translations of $\psi(t)$
\begin{eqnarray}
 \psi _{j,k} (t) = 2^{ - j/2} \psi (2^{ - j} t - k), \quad j,k \in \mathbb{Z}
\end{eqnarray}
construct the wavelet orthonormal basis for $L^2(\mathbb{R})$. The factors $2^j$ and $j$ are called the scale and octave respectively. The factor $k$ is called translation. The wavelet orthonormal basis for $L^2(\mathbb{R}^d)$ can be obtained by simply taking the tensor product functions generated by $d$ one-dimensional bases: \cite{book3,book4}
\begin{eqnarray}
\psi _{j_1 , \ldots, j_d ,k_1 , \ldots, k_d } (t_1 ,t_2 , \ldots ,t_d ) = \psi _{j_1 ,k_1 } (t_1 )\psi _{j_2 ,k_2 } (t_2 ) \cdots \psi _{j_d ,k_d } (t_d ).
\end{eqnarray}
For convenience, let $J = (j_1 , \ldots ,j_d )$ $\in \mathbb{Z}^d$, $K = (k_1 , \ldots ,k_d )$ $\in \mathbb{Z}^d$, $t =(t_1 , \ldots ,t_d )\in \mathbb{R}^d$. Then the wavelet coefficients of fractional Brownian sheet $B^H$ in $\mathbb{R}^d$ are:
\begin{eqnarray}\label{Eq:wavecoeff}
d_{B^H}(J,K) := \int_\mathbb{R} { \cdots \int_\mathbb{R} {\psi _{j_1 ,k_1 } (t_1 )\psi _{j_2 ,k_2 } (t_2 ) \cdots \psi _{j_d ,k_d } (t_d )B^H (t)dt_1 }  \cdots dt_d }.
\end{eqnarray}
Before introducing our estimator for $B^H$, we present some properties of the  wavelet coefficient of $B^H$.

\begin{proposition}\label{Prop:coef}
 Let $\psi(t) \in L^2(\mathbb{R})$ be an orthonormal with compact support and have $N\geq1$ vanishing moments. The wavelet coefficients of fractional Brownian sheet $B^H$ defined by Eq.(\ref{Eq:wavecoeff}) have the following properties:

 (\romannumeral1) $\mathbb{E}d_{B^H}(J,K) = 0$, and $d_{B^H}(J,K)$ is Gaussian, for any $J,K \in \mathbb{Z}^d$.

 (\romannumeral2) For fixed $J= (j_1 ,j_2 , \ldots ,j_d ) \in \mathbb{Z}^d$, and  $ \forall   K_1, \ldots, K_n \in \mathbb{Z}^d$, we have
  \begin{eqnarray}\label{Eq:coefscaling}
( d_{B^H } (J,K_1), \ldots, d_{B^H } (J,K_n) ) \mathop = \limits^d 2^{\sum\nolimits_{i = 1}^d {j_i (H_i  + 1/2)} } ( d_{B^H } (\mathbf{0}, K_1), \ldots, d_{B^H } (\mathbf{0}, K_n) ),
  \end{eqnarray}
where $\mathbf{0}$ denotes $(0,0,\cdots,0)\in \mathbb{Z}^d$.

 (\romannumeral3) For fixed $J \in \mathbb{Z}^d$, and  $ \forall  K_1, \ldots, K_n, L \in \mathbb{Z}^d$, we have
   \begin{eqnarray}\label{Eq:coefstationary}
( d_{B^H } (J,K_1+L), \ldots, d_{B^H } (J,K_n+L) ) \mathop = \limits^d ( d_{B^H } (J, K_1), \ldots, d_{B^H } (J, K_n) ),
  \end{eqnarray}

 (\romannumeral4) For fixed $J=(j_1, \ldots, j_d ) \in \mathbb{Z}^d$, $J'=(j'_1, \ldots, j'_d )\in \mathbb{Z}^d$, $K=(k_1, \ldots, k_d ) \in \mathbb{Z}^d$, and $K'=(k'_1, \ldots, k'_d ) \in \mathbb{Z}^d$, when $\mathop {\min }\limits_{1 \le i \le d} |2^{j_i } k_i  - 2^{j'_i } k'_i | \to  + \infty$
 \begin{eqnarray}\label{Eq:coefcorrelation}
\mathbb{E}d_{B^H } (J,K)d_{B^H } (J',K') \approx \prod\limits_{i = 1}^d {|2^{j_i } k_i  - 2^{j'_i } k'_i |^{2H_i  - 2N} }.
 \end{eqnarray}
In the above, $ \mathop=\limits^d$ means equality in distribution.
\end{proposition}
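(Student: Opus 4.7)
For part (i), because $\psi$ is compactly supported, the integral in (\ref{Eq:wavecoeff}) is an $L^2(\Omega)$-limit of finite Riemann sums applied to $B^H$. As such, $d_{B^H}(J,K)$ is a centered Gaussian random variable and Fubini gives $\mathbb{E}d_{B^H}(J,K)=0$. For part (ii), I perform the change of variables $t_i = 2^{j_i}u_i$ in (\ref{Eq:wavecoeff}): the Jacobian is $\prod_i 2^{j_i}$, while $\psi_{j_i,k_i}(2^{j_i}u_i)=2^{-j_i/2}\psi_{0,k_i}(u_i)$ collects a prefactor $\prod_i 2^{j_i/2}$, leaving
\[
d_{B^H}(J,K)=\prod_{i=1}^d 2^{j_i/2}\int \prod_{i=1}^d \psi_{0,k_i}(u_i)\,B^H(2^{j_1}u_1,\ldots,2^{j_d}u_d)\,du.
\]
The self-similarity recalled after (\ref{Eq:BHcov}) with $A=\mathrm{diag}(2^{j_1},\ldots,2^{j_d})$ gives the \emph{joint} equality in law $\{B^H(2^{j_1}u_1,\ldots,2^{j_d}u_d)\}_u \stackrel{d}{=}\{2^{\sum_i j_i H_i}B^H(u)\}_u$, and applying it inside the deterministic linear functionals indexed by $K_1,\dots,K_n$ yields (\ref{Eq:coefscaling}).

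For part (iii), since both sides are centered jointly Gaussian it suffices to match covariances. Substituting (\ref{Eq:BHcov}) into $\mathbb{E}[d_{B^H}(J,K_m+L)d_{B^H}(J,K_n+L)]$ and shifting $t \to u+(2^{j_1}L_1,\ldots,2^{j_d}L_d)$ and similarly for $s$, the integrand factors as $\prod_i(a_i+b_i+c_i)$ with $a_i=|u_i+2^{j_i}L_i|^{2H_i}$, $b_i=|v_i+2^{j_i}L_i|^{2H_i}$ and $c_i=-|u_i-v_i|^{2H_i}$. Expanding, any summand containing some $a_i$ depends on $u_i$ alone in its $i$-th slot, so the inner $v_i$-integral $\int \psi_{j_i,k_{n,i}}(v_i)\,dv_i$ vanishes by the vanishing-moment property $N\geq 1$; the analogous argument with $b_i$ uses the $u_i$-integral. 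Only the fully mixed term $\prod_i c_i$ survives, and it is independent of $L$, proving (\ref{Eq:coefstationary}).

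The main step is part (iv). The argument from (iii) with $L=0$ reduces the covariance to a product of one-dimensional integrals,
\[
\mathbb{E}d_{B^H}(J,K)d_{B^H}(J',K')=\frac{(-1)^d}{2^d}\prod_{i=1}^d \iint \psi_{j_i,k_i}(t_i)\psi_{j'_i,k'_i}(s_i)|t_i-s_i|^{2H_i}\,dt_i\,ds_i.
\]
Rescaling $t_i=2^{j_i}(u_i+k_i)$ and $s_i=2^{j'_i}(v_i+k'_i)$ converts the $i$-th factor, up to an explicit prefactor in $j_i,j'_i,H_i$, into $\iint \psi(u_i)\psi(v_i)|\Delta_i+2^{j_i}u_i-2^{j'_i}v_i|^{2H_i}\,du_i\,dv_i$ with $\Delta_i:=2^{j_i}k_i-2^{j'_i}k'_i$. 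Since $\psi$ has compact support and $|\Delta_i|\to\infty$, I Taylor-expand $x\mapsto|\Delta_i+x|^{2H_i}$ as a binomial series around $x=0$; the term of total order $n$ contributes $|\Delta_i|^{2H_i-n}$ times a polynomial of degree $n$ in $(u_i,v_i)$. The vanishing-moment condition $\int u^p\psi(u)\,du=0$ for $p<N$ kills every monomial whose $u_i$-degree or $v_i$-degree is below $N$, so the smallest surviving total degree is $n=2N$, yielding $|\Delta_i|^{2H_i-2N}$ as the leading asymptotic per factor; multiplying over $i$ gives (\ref{Eq:coefcorrelation}). The main obstacle is bookkeeping rather than conceptual: verifying that $n=2N$ is genuinely the dominant order (using $2H_i-2N<0$ since $H_i<1\leq N$) and uniformly controlling the Taylor remainder over the compact supports of $\psi$.
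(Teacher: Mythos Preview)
Your proof is correct and largely parallel to the paper's, but parts (iii) and (iv) take genuinely different routes that are worth noting.

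For (iii), the paper argues probabilistically: it uses that $B^H$ has stationary increments in each variable, combined with $\int\psi=0$, to shift the translation $L$ out one coordinate at a time (first showing $d_{B^H}(\mathbf{0},k_1+l_1,\ldots)\stackrel{d}{=}d_{B^H}(\mathbf{0},k_1,k_2+l_2,\ldots)$, then iterating), and extends to joint laws via linear combinations. You instead reduce to a covariance computation, expand $\prod_i(a_i+b_i+c_i)$, and use the single vanishing moment to kill every summand containing an $a_i$ or $b_i$. Your route is more elementary in that it only uses the explicit covariance formula (\ref{Eq:BHcov}) and never invokes the stationary-increments property; the paper's route is shorter once that property is granted and avoids expanding the product.

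For (iv), both arguments first factor the covariance into a product of one-dimensional fBm-type integrals (the paper's Eq.~(\ref{Eq:P1cov})). The paper then simply cites the known asymptotic $\mathbb{E}d_{B^{H_i}}(j,k)d_{B^{H_i}}(j',k')\approx|2^jk-2^{j'}k'|^{2H_i-2N}$ for fBm wavelet coefficients, whereas you supply the argument directly via the binomial/Taylor expansion of $|\Delta_i+x|^{2H_i}$ and the vanishing of $\int u^p\psi(u)\,du$ for $p<N$. Your version is self-contained; the paper's is terser but relies on an external reference. Both are valid and your bookkeeping (leading order $n=2N$, remainder control on the compact support) is accurate.
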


\begin{proof}
(\romannumeral1) $$\mathbb{E}d_{B^H } (J,K) = \int_{\mathbb{R}^d} {\mathbb{E}B^H (t)\psi _{j_1 ,k_1 } (t_1 )\psi _{j_2 ,k_2 } (t_2 ) \cdots \psi _{j_d ,k_d } (t_d )dt},$$
 $\mathbb{E}B^H (t)$ is a constant, and $\int {\psi _{j,k} (t)dt}  = 0$, so $\mathbb{E}d_{B^H}(J,K) = 0$. The Gaussianity of  $d_{B^H}(J,K)$ follows from the fact that the wavelet coefficient of Gaussian process is Gaussian (see \cite{FBMwave9}).

(\romannumeral2) This property can be obtained by the self-similarity of $B^H$. For $\forall K \in \mathbb{Z}^d$
\begin{align*}
 d_{B^H } (J,K) &= 2^{ - \frac{1}{2}\sum\nolimits_{i = 1}^d {j_i } } \int_{\mathbb{R}^d} {\psi (2^{ - j_1 } t_1  - k_1 ) \cdots \psi (2^{ - j_d } t_d  - k_d )B^H (t)dt}  \\
  &= 2^{\frac{1}{2}\sum\nolimits_{i = 1}^d {j_i } } \int_{\mathbb{R}^d} {\psi (t'_1  - k_1 ) \cdots \psi (t'_d  - k_d )B^H (2^{j_1 } t'_1 , \cdots ,2^{j_d } t'_d )dt'}  \\
 &\mathop  = \limits^d 2^{\sum\nolimits_{i = 1}^d {j_i (H_i  + 1/2)} } d_{B^H } (\mathbf{0},K),
\end{align*}
where $t'=(t'_1,\ldots, t'_d)$.

Similarly, one can check that, for $\theta_i \in \mathbb{R}, 1 \leq i \leq n$,
\begin{align*}
\sum\nolimits_{i = 1}^n {\theta _i d_{B^H } (J,K_i )}  \mathop  = \limits^d  2^{\sum\nolimits_{i = 1}^d {j_i (H_i  + 1/2)} } \sum\nolimits_{i = 1}^n {\theta _i d_{B^H } (\mathbf{0},K_i )}.
\end{align*}

(\romannumeral3) Let $d(J, k_1 , \ldots ,k_d ) = d(J,K)$, $L=(l_1,\ldots,l_d)$. Without loss of generality, we can take $J=\mathbf{0}$. First we check the formula below,
\begin{eqnarray}\label{Eq:d0kh}
  d_{B^H }(\mathbf{0},k_1  + l_1 , \ldots ,k_d  + l_d )\mathop  = \limits^d d_{B^H }(\mathbf{0}, k_1 ,k_2  + l_2 , \ldots ,k_d  + l_d ).
\end{eqnarray}
In fact, 
\begin{align*}
  &d_{B^H }(\mathbf{0},k_1  + l_1 , \ldots ,k_d  + l_d ) \\
  =& \int_{\mathbb{R}^d} {\psi (t_1  - k_1  - l_1 ) \cdots \psi (t_d  - k_d  - l_d )B^H (t_1 , \ldots ,t_d )dt_1  \cdots dt_d }  \\
  =&\int_{\mathbb{R}^{d - 1} } \int_\mathbb{R} {\psi (t'_1  - k_1 )B^H (t'_1  + l_1 ,t_2  \ldots ,t_d )dt'_1 } \cdot \prod\limits_{i = 2}^d {\psi (t_i  - k_i  - l_i )} dt_2  \cdots dt_d .
\end{align*}
Since $\int_\mathbb{R} {\psi (t'_1  - k_1 )dt'_1 }  = 0$ and $B^H$ has stationary increments with respect to each variable, we have
\begin{align*}
 \int_\mathbb{R} {\psi (t'_1  - k_1 )B^H (t'_1  + l_1 ,t_2  \ldots ,t_d )dt'_1 } \mathop  = \limits^d\int_\mathbb{R} {\psi (t'_1  - k_1 )B^H (t'_1 ,t_2  \ldots ,t_d )dt'_1 }.
\end{align*}

The Eq.(\ref{Eq:d0kh}) is proved. 

Applying Eq.(\ref{Eq:d0kh}) repeatedly, 
\begin{align*}
 d_{B^H } (\mathbf{0},K+L) &\mathop  = \limits^d d_{B^H } (\mathbf{0},k_1 ,k_2  + l_2 , \ldots ,k_d  + l_d )\mathop  = \limits^d d_{B^H } (\mathbf{0},k_1 ,k_2 , k_3+l_3, \ldots ,k_d  + l_d ) \\
                             &\mathop  = \limits^d  \ldots \mathop  = \limits^d d_{B^H } (\mathbf{0},k_1 , \ldots ,k_d ) = d_{B^H } (\mathbf{0},K).
\end{align*}

Similarly, one can see that, for $\theta_i \in \mathbb{R}, 1 \leq i \leq n$,
\begin{align*}
\sum\nolimits_{i = 1}^n {\theta _i d_{B^H } (\mathbf{0},K_i+L )}  \mathop  = \limits^d  \sum\nolimits_{i = 1}^n {\theta _i d_{B^H } (\mathbf{0},K_i )}.
\end{align*}

(\romannumeral4) The covariance function of $B^H$ is given by Eq.(\ref{Eq:BHcov}). And
\begin{align*}
 &\mathbb{E}d_{B^H } (J,K)d_{B^H } (J',K') \\
  =& \int_{\mathbb{R}^d} {\int_{\mathbb{R}^d} {\mathbb{E}B^H (t)B^H (s)\prod\limits_{i = 1}^d {\psi _{j_i ,k_i } (t_i )\psi _{j'_i ,k'_i } (s_i )} dsdt} }.
\end{align*}
So we have:
\begin{align}
 &\mathbb{E}d_{B^H } (J,K)d_{B^H } (J',K') \notag \\
  =& \prod\limits_{i = 1}^d {\int_\mathbb{R} {\int_\mathbb{R} {\psi _{j_i ,k_i } (t_i )\psi _{j'_i ,k'_i } (s_i )\frac{1}{2}(|s_i |^{2H_i }  + |t_i |^{2H_i }  - |s_i  - t_i |^{2H_i } )ds_i dt_i } } } \notag \\
  =& \prod\limits_{i = 1}^d { \mathbb{E}d_{B^{H_i} } (j_i,k_i)d_{B^{H_i} } (j'_i,k'_i)  }, \label{Eq:P1cov}
\end{align}
where $B^{H_i}$ is fBm with Hurst exponent $H_i$.

It has been shown that the correlations between wavelet coefficients of fBm satisfy this asymptotic equation \cite{FBMwave11a,FBMwave11b,FBMwave6}
\begin{eqnarray*}
\mathbb{E}d_{B^H } (j,k)d_{B^H } (j',k') \approx |2^j k - 2^{j'} k'|^{2H - 2N} ,|2^j k - 2^{j'} k'| \to \infty,
\end{eqnarray*}
where $j,j',k,k' \in \mathbb{Z}$, and $B^H$ is fBm with Hurst exponent $H$, and the wavelet used has compact support.

The Eq.(\ref{Eq:coefcorrelation}) is proved.
\end{proof}

\begin{remark}\label{Rmk:N}
In view of Eq.(\ref{Eq:coefcorrelation}), to avoid long-range dependence for $d_{B^H}(J,K)$, i.e., to ensure that $\sum\nolimits_{J,K \in \mathbb{Z}^d } {\mathbb{E}|d_{B^H } (J,K)d_{B^H }(\mathbf{0},\mathbf{0})|}  < \infty$, one can choose
\begin{eqnarray}\label{Eq:N}
 N > \mathop {\max }\limits_{1 \le i \le d} H_i  + 1/2, i.e., \mathop {\max }\limits_{1 \le i \le d} (2H_i  - 2N) <  - 1.
\end{eqnarray}
Under this condition, the correlation of $d_{B^H } (J,K)$ tends rapidly to $0$ at large lags.
\end{remark}

\subsection{Estimation of the Hurst parameter and its asymptotic behavior}\label{Sec:estiH}

In this subsection, we estimate the Hurst parameter of fractional Brownian sheet in the $d$-dimensional finite interval.
The following lemma can be obtained by direct calculation.

\begin{lemma}\label{Lm:Tnj}
$\{ B^H(t), t \in \prod^{d}_{i=1}[0, T_i], T=(T_1, \ldots, T_d) \in \mathbb{R}_+^d  \}$ is a fractional Brownian sheet in the $d$-dimensional finite interval.
$\psi(t)$ is an orthogonal wavelet with compact support $[-M, M]$, $M>0$ and has $N\geq 1$ vanishing moments.
We define:
\begin{align}\label{Eq:dT}
d_{B^H,T}(J,K) := \int_0^{T_d} { \cdots \int_0^{T_1} {\psi _{j_1 ,k_1 } (t_1 )\psi _{j_2 ,k_2 } (t_2 ) \cdots \psi _{j_d ,k_d } (t_d )B^H (t) dt_1 }  \cdots dt_d }.
\end{align}
For ease of writing, let $ M \leq 1$. So when $ 1 \leq k_i \leq 2^{-j_i}T_i - M$, $i=1, \ldots, d$,
\begin{align}\label{Eq:dTd}
d_{B^H,T}(J,K) = d_{B^H}(J,K),
\end{align}
where $d_{B^H}(J,K)$ is defined by Eq.(\ref{Eq:wavecoeff}).
\end{lemma}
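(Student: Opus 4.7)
The plan is to reduce the identity $d_{B^H,T}(J,K)=d_{B^H}(J,K)$ to a support check: the tensor-product wavelet appearing in the integrand vanishes outside the box $\prod_{i=1}^d[0,T_i]$ under the stated bounds on $K$, so extending the domain of integration from the box to all of $\mathbb{R}^d$ does not change the value of the integral.

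First I would identify the support of each factor. Since $\psi$ is supported on $[-M,M]$, the dilated and translated function $\psi_{j_i,k_i}(t_i)=2^{-j_i/2}\psi(2^{-j_i}t_i-k_i)$ is supported where $2^{-j_i}t_i-k_i\in[-M,M]$, i.e., on the interval $[\,2^{j_i}(k_i-M),\,2^{j_i}(k_i+M)\,]$. Using the hypothesis $M\le 1$ together with $k_i\ge 1$, the left endpoint satisfies $2^{j_i}(k_i-M)\ge 0$; using $k_i\le 2^{-j_i}T_i-M$, the right endpoint satisfies $2^{j_i}(k_i+M)\le T_i$. Hence $\operatorname{supp}\psi_{j_i,k_i}\subseteq[0,T_i]$ for every $i=1,\ldots,d$.

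Consequently the tensor-product function $\psi_{j_1,k_1}(t_1)\cdots\psi_{j_d,k_d}(t_d)$ is supported inside $\prod_{i=1}^d[0,T_i]$, and so the integrand in (\ref{Eq:wavecoeff}) vanishes off the box. Extending each $\int_0^{T_i}$ in (\ref{Eq:dT}) to $\int_{\mathbb{R}}$ therefore adds nothing, and we obtain $d_{B^H,T}(J,K)=d_{B^H}(J,K)$, which is (\ref{Eq:dTd}).

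There is no real obstacle here: the argument is a direct support check, with the only mild subtlety being to track the paper's convention $\psi_{j,k}(t)=2^{-j/2}\psi(2^{-j}t-k)$ (so that scale is $2^j$, not $2^{-j}$) in order to get the endpoints $2^{j_i}(k_i\pm M)$ right and to see that the hypothesis $M\le 1$ is precisely what makes $k_i\ge 1$ enough to guarantee the lower support bound. Measurability and integrability of $B^H$ against the compactly supported wavelet are automatic from the continuity of $B^H$, so no further technicalities are needed.
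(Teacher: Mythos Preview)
Your proof is correct and is precisely the ``direct calculation'' the paper alludes to: the paper does not spell out a proof of this lemma at all, merely remarking that it ``can be obtained by direct calculation,'' and your support check is exactly that calculation. Nothing more is needed.
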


\begin{definition}\label{def:acoff}
Under the conditions of Lemma \ref{Lm:Tnj},
we define that the available wavelet coefficients for the finite interval fractional Brownian sheet are
the wavelet coefficients $d_{B^H,T}(J,K)$ that satisfy Eq.(\ref{Eq:dTd}).
Let $n_J$ be the number of available wavelet coefficients at octave $J\in \mathbb{Z}^d$.
\begin{align*}
n_J = \prod\nolimits_{i=1}^d n_i, \;\; n_i := \lfloor 2^{-j_i}T_i - M \rfloor.
\end{align*}
If  there exists $i_0 \in \{1, \ldots, d \}$ s.t. $T_{i_0} \to +\infty $, then $n_J \to +\infty$.
\end{definition}

Under the conditions of Lemma \ref{Lm:Tnj},
according to Eq.(\ref{Eq:coefscaling}), Eq.(\ref{Eq:coefstationary}) and Lemma \ref{Lm:Tnj},
the wavelet coefficients of $B^H(t)$ satisfy the equation,
\begin{eqnarray}\label{Eq:2law}
\mathbb{E}d^2_{B^H,T} (J,K) = \mathbb{E}d^2_{B^H } (J,K) = C_{B^H} 2^{j_1 (2H_1  + 1) + j_2 (2 H_2  + 1) +  \cdots  + j_d (2H_d  + 1)},
\end{eqnarray}
where the Hurst parameter $H= (H_1,\ldots,H_d) \in (0,1)^d$, $C_{B^H}= \mathbb{E} d^2_{B^H}(\mathbf{0},\mathbf{1})$ only depends on $B^H$, $J=(j_1,\ldots,j_d), K=(k_1,\ldots,k_d) \in \mathbb{Z}^d$.

Take the 2-logarithm of both sides:
\begin{eqnarray}\label{Eq:loglaw}
\log _2 \mathbb{E} d^2_{B^H,T} (J,K) = j_1 (2 H_1  + 1) +  \cdots  + j_d (2 H_d  + 1) + \log _2 C_{B^H}.
\end{eqnarray}

So the estimation of $H$ can be realized by a linear regression in the left part versus $J$ diagram. The $\mathbb{E} d^2_{B^H,T}(J,K)$ can be estimated by
\begin{align*}
S(J):=1/n_J \sum\nolimits_{k_d=1}^{n_d} { \cdots \sum\nolimits_{k_1=1}^{n_1} { d^2_{B^H,T}(J,K)} }.
\end{align*}
According to Lemma \ref{Lm:Tnj},
\begin{align}\label{Eq:Sj}
S(J) =1/n_J \sum\nolimits_{k_d=1}^{n_d} { \cdots \sum\nolimits_{k_1=1}^{n_1} { d^2_{B^H}(J,K)} }.
\end{align}

In more detail, the $H$ is estimated by the regression
\begin{align*}
\left( {\begin{array}{*{20}c}
   {\log _2 S (J_1 )}  \\
    \vdots   \\
   {\log _2 S (J_m )}  \\
\end{array}} \right) = \left( {\begin{array}{*{20}c}
   {\begin{array}{*{20}c}
   {J_1 }  \\
    \vdots   \\
   {J_m }  \\
\end{array}} & {\begin{array}{*{20}c}
   1  \\
    \vdots   \\
   1  \\
\end{array}}  \\
\end{array}} \right)\left( {\begin{array}{*{20}c}
   {2H_1  + 1}  \\
    \vdots   \\
   {2H_d  + 1}  \\
   {\log _2 C_{B^H} }  \\
\end{array}} \right) + \left( {\begin{array}{*{20}c}
   {\varepsilon(J_1 )}  \\
    \vdots   \\
   {\varepsilon  (J_m )}  \\
\end{array}} \right),
\end{align*}
where $J_l=(j_{l,1},\ldots,j_{l,d}) \in \mathbb{Z}^d_+, l \in \{1,\ldots,m \}$, and $J_1 < \cdots < J_m$, ( $J_l<J_k$ in the sense that $j_{l,i} \leq j_{k,i}$, for $i \in \{1,\ldots,d \}$  and the equality can't hold for all $i$ ).

Let
\begin{align*}
L := \left( {\begin{array}{*{20}c}
   {\log _2 S (J_1 )}  \\
    \vdots   \\
   {\log _2 S (J_m )}  \\
\end{array}} \right), A := \left( {\begin{array}{*{20}c}
   {\begin{array}{*{20}c}
   {J_1 }  \\
    \vdots   \\
   {J_m }  \\
\end{array}} & {\begin{array}{*{20}c}
   1  \\
    \vdots   \\
   1  \\
\end{array}}  \\
\end{array}} \right),\alpha  := \left( {\begin{array}{*{20}c}
   {2H_1  + 1}  \\
    \vdots   \\
   {2H_d  + 1}  \\
   {\log _2 C_{B^H} }  \\
\end{array}} \right),\varepsilon {\rm{ := }}\left( {\begin{array}{*{20}c}
   {\varepsilon (J_1 )}  \\
    \vdots   \\
   {\varepsilon  (J_m )}  \\
\end{array}} \right),
\end{align*}
suppose that $m \geq d+1$, ${\rm rank} (A) = d+1$.

Then least squares estimation gives the estimator:
\begin{align}
 & \hat \alpha = (A' \Sigma^{-1} A)^{ - 1} A \Sigma^{-1} L, \notag \\
 & \hat H  = \hat \alpha/2  - 1/2,\label{Eq:Hesti}
\end{align}
where $\hat H  = (\hat H_1, \cdots, \hat H_d, \widehat{\log_2 C_{B^H}} /2 -1/2 )'$. $\hat H_i$ is the wavelet-based estimator of $H_i$, $i \in \{1,\ldots,d \}$, $\Sigma$ is a $m\times m$ full-rank matrix. $\hat H$ is selected to minimize
\begin{align*}
 \left\| {L  - A\alpha  } \right\|_{\Sigma} ^2  = (L - A\alpha)' \Sigma^{-1}  (L  - A\alpha ).
\end{align*}

If $\Sigma = I_{m\times m}$,  $\hat H$ is the ordinary least squares (OLS) estimator.
If $\Sigma = {\mathop{\rm Var}} L = {\mathop{\rm Var}} \varepsilon$,  $\hat H$ is the generalized least squares (GLS) estimator.

The following central limit theorem indicates the asymptotic normality of the estimators $\hat H$.
\begin{theorem}\label{Thm:Hmain}
Under the conditions of Lemma \ref{Lm:Tnj}, $\hat H$ is defined by Eq.(\ref{Eq:Hesti}),
$N>\mathop{\max }\limits_{1 \le i \le d} H_i+1/4$.
For any $c \in \{1, \ldots, d \}$, $T_c \to +\infty $, so $n_{l,c}\to +\infty$ and $n_{J_l}\to +\infty$, $ l \in \{1, \ldots, m \}$ ($n_{J_l}=  \prod\nolimits_{i=1}^d n_{l,i}$ is the same as that in Definition \ref{def:acoff}),
we have
\begin{align}
n_{m,c} ^{1/2} (\hat H - U) \mathop\to\limits^d N(0,\Sigma _{\hat H}),
\end{align}
where $U=(H',\log_2 C_{B^H}/2-1/2)'$ and
\begin{align}\label{Eq:sigmaH}
\Sigma _{\hat H} =\frac{1}{4}(A' \Sigma^{-1} A)^{ - 1} A' \Sigma^{-1} \Sigma _{L} \Sigma^{-1} A (A' \Sigma^{-1} A)^{ - 1},
\end{align}
with $\Sigma _{L }  = D' \Sigma _{S } D$, $\Sigma _{S}  = (\sigma ^2 _{l,h} )_{m \times m}$, $\sigma ^2 _{l,h}$ is defined by (\ref{Eq:covSqlh}),
 and
\begin{align*}
D = \log _2 e \cdot {\rm diag}(\left( { \mathbb{E} d^2_{B^H } (J_1 ,K)} \right)^{ - 1} , \ldots ,\left( { \mathbb{E} d^2_{B^H } (J_m ,K) } \right)^{ - 1} ).
\end{align*}
\end{theorem}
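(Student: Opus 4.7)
The plan is to reduce the asymptotic distribution of $\hat H$ to a joint central limit theorem for the sample second moments $S(J_1),\ldots,S(J_m)$, and then transfer that CLT through the componentwise logarithm and the linear projection $(A'\Sigma^{-1}A)^{-1}A'\Sigma^{-1}$ to obtain the result for $\hat H = \hat\alpha/2 - 1/2$. Concretely, since $\hat\alpha$ is an affine function of $L$ and $L$ is obtained componentwise as $\log_2$ of the $S(J_l)$'s, once I have a multivariate CLT for $(S(J_l))_{l=1}^m$ the rest follows by two successive applications of Theorem \ref{Thm:funclimit}, with the final factor $1/4$ coming from the identity $\hat H = \hat\alpha/2 - 1/2$.

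The first main task is the joint CLT for the vector $(S(J_l))_{l=1}^m$. For each fixed $J_l$, by Proposition \ref{Prop:coef}(i) and (iii), the array $\{d_{B^H}(J_l,K)\}_{K}$ is a stationary mean-zero Gaussian field indexed by $K\in\mathbb Z^d$. Since $f(x)=x^2$ has Hermite rank $2$ by Lemma \ref{Lm:rank1}, and more generally any linear combination of such squares (needed below for Cram\'er-Wold) has Hermite rank $2$ by Lemma \ref{Lm:rank2}, Arcones' theorem (Theorem \ref{Thm:Arcones}) applies provided $\sum_K |\mathbb E d_{B^H}(J_l,K)d_{B^H}(J_{l'},K')|^2 <\infty$. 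By Proposition \ref{Prop:coef}(iv) this covariance decays like $\prod_{i=1}^d |2^{j_i}k_i-2^{j'_i}k'_i|^{2H_i-2N}$, so the squared sum converges iff $2(2H_i-2N)<-1$ for every $i$, i.e.\ $N>H_i+1/4$, which is exactly the hypothesis $N>\max_i H_i+1/4$. To pass from the one-dimensional index set assumed in Arcones' theorem to the multi-index $K\in\mathbb Z^d$, I would serialize the sum in a lexicographic order and verify that the covariance structure along this order still satisfies the summability condition; alternatively one shows the $d$-dimensional version of Arcones directly, since the proof of his theorem uses only the bound on the Hermite expansion of $f$ and the covariance sum. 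Combining with Cram\'er-Wold (Theorem \ref{Thm:C-W}) and Lemma \ref{Lm:rank2}, this yields
\begin{align*}
n_{m,c}^{1/2}\bigl(S(J_1)-\mathbb E S(J_1),\ldots,S(J_m)-\mathbb E S(J_m)\bigr)' \mathop\to\limits^d N(\mathbf 0,\Sigma_S),
\end{align*}
where $\Sigma_S=(\sigma^2_{l,h})$ with entries given by (\ref{Eq:covSqlh}). The common normalization $n_{m,c}^{1/2}$ (rather than $n_{J_l}^{1/2}$) is legitimate because $n_{l,c}/n_{m,c}\to 2^{j_{m,c}-j_{l,c}}$ is a positive finite constant as $T_c\to\infty$, and similarly for the other coordinates of $n_{J_l}$; these constants get absorbed into $\Sigma_S$.

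Next I apply the delta method (Theorem \ref{Thm:funclimit}) with $g_l(x)=\log_2 x_l$ evaluated at $\mu_l=\mathbb E S(J_l)=\mathbb E d_{B^H}^2(J_l,K)$. The Jacobian is the diagonal matrix
\begin{align*}
D = \log_2 e\cdot\mathrm{diag}\bigl((\mathbb E d_{B^H}^2(J_1,K))^{-1},\ldots,(\mathbb E d_{B^H}^2(J_m,K))^{-1}\bigr),
\end{align*}
yielding $n_{m,c}^{1/2}(L-\mathbb E L)\mathop\to\limits^d N(\mathbf 0,D'\Sigma_S D)=N(\mathbf 0,\Sigma_L)$. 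By (\ref{Eq:loglaw}), $\mathbb E L = A\,U^{*}$ where $U^{*}=(2H_1+1,\ldots,2H_d+1,\log_2 C_{B^H})'$, so $\hat\alpha - U^{*} = (A'\Sigma^{-1}A)^{-1}A'\Sigma^{-1}(L-AU^{*})$ is a linear combination of $L-\mathbb E L$, giving a CLT for $\hat\alpha$ with asymptotic covariance $(A'\Sigma^{-1}A)^{-1}A'\Sigma^{-1}\Sigma_L\Sigma^{-1}A(A'\Sigma^{-1}A)^{-1}$. Finally the transformation $\hat H = \hat\alpha/2 - 1/2$ introduces the factor $1/4$, producing $\Sigma_{\hat H}$ as displayed in (\ref{Eq:sigmaH}).

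The main technical obstacle I expect is the Arcones summability check in the multi-index $d$-dimensional setting and, closely related, handling the replacement of $d_{B^H,T}(J,K)$ by $d_{B^H}(J,K)$ via Lemma \ref{Lm:Tnj}: one must verify that the boundary coefficients that were dropped do not contribute to the limit and that the asymptotic covariance expression (\ref{Eq:coefcorrelation}) (which is only stated in the limit $\min_i|2^{j_i}k_i-2^{j'_i}k'_i|\to\infty$) nevertheless yields a bound valid for all sufficiently large lags so that the summability actually holds. Once that bookkeeping is in place, the chain of applications of Theorems \ref{Thm:Arcones}, \ref{Thm:C-W}, and \ref{Thm:funclimit} is routine and delivers (\ref{Eq:sigmaH}).
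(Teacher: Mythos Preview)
Your overall architecture---Arcones for $S$, Cram\'er--Wold to vectorize, then two applications of Theorem \ref{Thm:funclimit}---is exactly the route the paper takes, so the strategy is right. The difference is in how the multi-index is fed into Arcones' one-dimensional theorem, and here your proposal has a real gap.

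You suggest either a lexicographic serialization of $K\in\mathbb Z^d$ or a $d$-dimensional Arcones. Neither matches the setting: only $T_c\to\infty$, while the other coordinates $n_i$, $i\neq c$, stay bounded. A lexicographic enumeration of $\{1,\ldots,n_1\}\times\cdots\times\{1,\ldots,n_d\}$ is not a stationary sequence (shifts wrap around the bounded directions), so Arcones does not apply to it. And a genuine random-field Arcones is overkill when only one direction grows. The paper's trick is to use the single growing coordinate as the scalar Arcones index: for fixed $J$, set
\[
X_v \;=\; \bigl(d_{B^H}(J,K)\,:\,k_c=v,\ k_i\in\{1,\ldots,n_i\}\text{ for }i\neq c\bigr),\qquad 1\le v\le n_c,
\]
which is a stationary mean-zero Gaussian sequence of $(n_J/n_c)$-dimensional vectors by Proposition \ref{Prop:coef}(iii), and $S(J)=n_c^{-1}\sum_v h(X_v)$ with $h$ a sum of squares, Hermite rank $2$ by Lemma \ref{Lm:rank2}. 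This is what makes the summability check reduce cleanly to $\sum_k |r^{(i_1,i_2)}(k)|^2<\infty$ along the $c$-axis, which Proposition \ref{Prop:coef}(iv) gives under $N>\max_i H_i+1/4$.

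For the joint CLT across the $m$ octaves (your Cram\'er--Wold step), a second construction is needed because $n_{l,c}$ differs across $l$. The paper blocks the $c$-index at scale $l$ into groups of size $u_l=2^{j_{m,c}-j_{l,c}}\approx n_{l,c}/n_{m,c}$, so that every scale contributes exactly $n_{m,c}$ blocks; concatenating these blocks over $l=1,\ldots,m$ yields a single stationary vector sequence $\{X_v\}_{v=1}^{n_{m,c}}$ to which Arcones applies with $h(X_v)=\sum_l a_l\cdot(\text{block average of squares at scale }l)$. This blocking is where the ratios $u_l$ you mention actually enter the proof, and it is what produces the specific form of $\sigma^2_{l,h}$ in (\ref{Eq:covSqlh}).

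One small correction: after the delta method the centering is $Lu_l=\log_2\mathbb E\,d_{B^H}^2(J_l,K)$, not $\mathbb E\log_2 S(J_l)$; it is $Lu$ that equals $AU^*$ via (\ref{Eq:loglaw}), so write $n_{m,c}^{1/2}(L-Lu)\to N(0,\Sigma_L)$ rather than $L-\mathbb E L$. With these two constructions in place and that notational fix, the rest of your chain goes through verbatim.
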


\begin{remark}\label{Rmk:sigma}
$\Sigma _{L } / n_{m,c} $ is the asymptotic covariance of $L$.
For GLS estimator, $\Sigma$ is the covariance matrix of $L$. If $T_c \to +\infty $, $\Sigma \to \Sigma _{L}/ n_{m,c}$, then
\begin{align}
\Sigma _{\hat H} = \frac{1}{4} (A' \Sigma_{L}^{-1} A)^{ - 1}.
\end{align}
\end{remark}
\begin{remark}
According to the equation between $T_c$ and $n_{m,c}$ (see Definition \ref{def:acoff}), Theorem \ref{Thm:Hmain} indicates that the convergence rate of $\hat H$ is $O(1/\sqrt{T_c})$ as $T_c \to +\infty $ for any $c \in \{1, \ldots, d \}$.
Similarly, we can also obtain that the convergence rate of $\hat H$ is $O(1/\sqrt{\prod\nolimits_{i=1}^d {T_i}})$ as $T_i \to +\infty $ for all $i \in \{1, \ldots, d \}$.
That is to say, the estimation accuracy of $\hat H$ can be improved by increasing the square root of sample volume $\sqrt{\prod\nolimits_{i=1}^d {T_i}}$.
\end{remark}

%

\subsection{Proof of Theorem \ref{Thm:Hmain}} \label{Sec:Hmainproof}

The following limit theorem of $S(J)$ can be obtained by using Theorem \ref{Thm:Arcones}.

\begin{lemma}\label{Lm:SqJ}
Under the conditions of Lemma \ref{Lm:Tnj}, $S(J)$ is defined by Eq.(\ref{Eq:Sj}),
$N>\mathop{\max }\limits_{1 \le i \le d} H_i+1/4$.
For any $c \in \{1, \ldots, d \}$, $T_c \to +\infty $, so $n_c\to +\infty$ and $n_J\to +\infty$, we have
\begin{align}
n_c ^{1/2} [S(J) - u (J)] \mathop\to\limits^d N(0,\sigma^2 (J)),
\end{align}
where $n_J$ is the same as that in Definition \ref{def:acoff}, $u (J) = \mathbb{E} d^2_{B^H} (J,K)$,
\begin{align}
 \sigma  ^2 (J) &= \frac{1}{(n_J /n_c )^2}{\mathop{\rm Var}} (\sum\limits_{k_c  = 1,k_i  \in \{ 1, \ldots n_i \} ,i \ne c} d^2_{B^H } (J,K) ) \notag\\
  &+ \frac{2}{(n_J /n_c )^2}\sum\limits_{k = 1}^\infty  {{\mathop{\rm Cov}} (\sum\limits_{k_c  = 1,k_i  \in \{ 1, \ldots n_i \} ,i \ne c} d^2_{B^H } (J,K) ,\sum\limits_{k_c  = 1 + k,k_i  \in \{ 1, \ldots n_i \} ,i \ne c} d^2_{B^H } (J,K))}.
\end{align}
\end{lemma}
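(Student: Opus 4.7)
The plan is to reduce the average defining $S(J)$ to a one-dimensional sum of a stationary sequence indexed by the coordinate $k_c$ and then invoke Theorem \ref{Thm:Arcones}. Since only $T_c \to +\infty$, every $n_i$ with $i \neq c$ stays fixed, so one may group the squared wavelet coefficients into blocks labelled by $k_c$.

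Concretely, I would set
\begin{align*}
Y_{k_c} := \sum_{k_i \in \{1,\ldots,n_i\},\, i \neq c} d^2_{B^H}(J,K),
\end{align*}
so that $S(J) = \frac{1}{n_J}\sum_{k_c=1}^{n_c} Y_{k_c} = \frac{n_c}{n_J}\cdot\frac{1}{n_c}\sum_{k_c=1}^{n_c} Y_{k_c}$. The vector $X_{k_c} := (d_{B^H}(J,K))_{k_i \in \{1,\ldots,n_i\},\, i \neq c}$ is a zero-mean Gaussian vector of fixed dimension $n_J/n_c$, and Proposition \ref{Prop:coef}(iii) makes the sequence $(X_{k_c})_{k_c \in \mathbb{Z}}$ strictly stationary. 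Writing $Y_{k_c}=f(X_{k_c})$ with $f(x)=\sum_\ell x_\ell^2$, Lemma \ref{Lm:rank2} (taking $q=2$ and all coefficients equal to $1$) gives Hermite rank $\tau = 2$.

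Next I would verify Arcones' summability assumption (\ref{Eq:ArconesR}) with $\tau=2$. The factorisation (\ref{Eq:P1cov}) from the proof of Proposition \ref{Prop:coef}(iv) splits every component covariance $r^{(i_1,i_2)}(k)$ into a product of $d-1$ bounded constants (the transverse factors, whose $k_i$ values come from a fixed finite range) and one $k$-dependent 1D fBm wavelet correlation $\mathbb{E}[d_{B^{H_c}}(j_c,m)\,d_{B^{H_c}}(j_c,m+k)]$. The one-dimensional estimate quoted after (\ref{Eq:P1cov}) gives the latter of order $|k|^{2H_c - 2N}$ as $|k|\to\infty$, so $|r^{(i_1,i_2)}(k)|^2 = O(|k|^{4H_c-4N})$, which is summable exactly under the hypothesis $N>\max_i H_i + 1/4$.

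Theorem \ref{Thm:Arcones} then yields $n_c^{-1/2}\sum_{k_c=1}^{n_c}(Y_{k_c}-\mathbb{E}Y_{k_c})\mathop{\to}\limits^d N(0,\tilde\sigma^2)$ with $\tilde\sigma^2 = {\rm Var}(Y_1)+2\sum_{k\geq1}{\rm Cov}(Y_1,Y_{1+k})$; multiplying by the deterministic factor $n_c/n_J$ and noting that Proposition \ref{Prop:coef}(iii) makes $\mathbb{E}d^2_{B^H}(J,K)$ constant in $K$ (hence $\mathbb{E}S(J)=u(J)$ exactly) produces the stated limit with $\sigma^2(J)=(n_J/n_c)^{-2}\tilde\sigma^2$, matching the displayed formula. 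The main obstacle is this summability step: property (iv) of Proposition \ref{Prop:coef} only controls the full-dimensional decay when all coordinate gaps tend to infinity, whereas here only the $c$-th gap grows. Handling it forces the argument to revert to the one-dimensional fBm wavelet-correlation decay on the $c$-factor while relying on the boundedness (not decay) of the finitely many fixed transverse factors.
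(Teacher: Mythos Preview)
Your proposal is correct and follows essentially the same route as the paper: you build the same stationary Gaussian vector sequence indexed by $k_c$, identify $S(J)$ as the sample mean of a rank-$2$ functional via Lemma~\ref{Lm:rank2}, check Arcones' summability with $\tau=2$, and read off the limiting variance from (\ref{Eq:ArconesV}). Your treatment of the summability step is in fact more explicit than the paper's, which simply asserts it ``by Proposition~\ref{Prop:coef}''; your observation that only the $c$-factor decays (via the one-dimensional fBm estimate) while the transverse factors are merely bounded is exactly what underlies that assertion.
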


\begin{proof}
We construct a sequence of random vectors $X_v$ from the wavelet coefficients $d_{B^H } (J,K)$,
\begin{align*}
X_v  = (X_v (1), \ldots ,X_v (n_J /n_c ))': = [d_{B^H } (J,K)|_{k_c  = v, k_i  \in \{ 1, \ldots n_i \}, i \ne c } ]_{(n_J /n_c ) \times 1}, \; 1 \leq v \leq n_c.
\end{align*}
When $n_c\to +\infty$, using Proposition \ref{Prop:coef}, $\{ X_v \}_{v=1}^{\infty}$ is a stationary mean-zero Gaussian sequence.
Define
\begin{align*}
h(X_v ) = \frac{1}{{n_J /n_c }}\sum\nolimits_{i = 1}^{n_J /n_c } {|X_v (i)|^2 }.
\end{align*}
According to Lemma \ref{Lm:rank2}, the Hermite rank of $h(X_v )$ is 2.
So by Proposition \ref{Prop:coef}, when $N>\mathop{\max }\limits_{1 \le i \le d} H_i+1/4$, for each $1\leq i_1, i_2 \leq n_J /n_c $,
\begin{align*}
   \sum\nolimits_{k=-\infty}^{\infty} | r^{(i_1, i_2)}(k) |^2 < \infty,
\end{align*}
where $r^{(i_1,i_2)}(k)=\mathbb{E}[X_m(i_1) X_{m+k}(i_2)]$.

Hence, by Theorem \ref{Thm:Arcones}, the lemma is true.
\end{proof}

Now consider the asymptotic behavior of the full vector $S = (S (J_1),\ldots,S(J_m))'$. 
In order to apply Theorem \ref{Thm:C-W} (Cram\'{e}r-Wold theorem), we need to study the asymptotic behavior of $a S$, $a=(a_1,\ldots,a_m)$ is a fixed but arbitrary element of $\mathbb{R}^m$.

\begin{lemma}\label{Lm:aSq}
Under the conditions of Lemma \ref{Lm:Tnj}, let $a=(a_1,\ldots,a_m)$ be a fixed but arbitrary element of $\mathbb{R}^m$, $S = (S (J_1),\ldots,S(J_m))'$.
$N>\mathop{\max }\limits_{1 \le i \le d} H_i+1/4$.
For any $c \in \{1, \ldots, d \}$, $T_c \to +\infty $, $n_{J_l}=  \prod\nolimits_{i=1}^d n_{l,i}$, $u (J_l) = \mathbb{E} d^2_{B^H} (J_l,K)$, $ l \in \{1, \ldots, m \}$, we have
\begin{align}
n_{m,c } ^{1/2} [aS - \sum\nolimits_{l = 1}^m {a_l u (J_l )} ]\mathop  \to \limits^d N(0,\sigma _{aS } ^2 ),
\end{align}

\begin{align}
 & \sigma _{aS } ^2  = {\mathop{\rm Var}} \left( {\sum\limits_{l = 1}^m {a_l \frac{{n_{l,c} }}{{u_l n_{J_l } }}\sum\limits_{\scriptstyle k_c  \in \{ 1, \ldots ,u_l \} , \hfill \atop
  \scriptstyle k_i  \in \{ 1, \ldots ,n_{l,i} \} ,i \ne c \hfill} {d^2_{B^H } (J_l ,K ) } } } \right) +     \notag\\
 &2\sum\limits_{k = 1}^\infty  {{\mathop{\rm Cov}} \left( {\sum\limits_{l = 1}^m {a_l \frac{{n_{l,c} }}{{u_l n_{J_l } }}\sum\limits_{\scriptstyle k_c  \in \{ 1, \ldots ,u_l \} , \hfill \atop
  \scriptstyle k_i  \in \{ 1, \ldots ,n_{l,i} \} ,i \ne c \hfill} {d^2_{B^H } (J_l ,K ) } } ,\sum\limits_{l = 1}^m {a_l \frac{{n_{l,c} }}{{u_l n_{J_l } }}\sum\limits_{\scriptstyle k_c  \in \{ u_l k+1, \ldots ,u_l (k+1)\} , \hfill \atop
  \scriptstyle k_i  \in \{ 1, \ldots ,n_{l,i} \} ,i \ne c \hfill} {d^2_{B^H } (J_l ,K ) } } } \right)},
\end{align}
and $K=(k_1, \ldots,  k_d)$, $u_l  = 2^{j_{m,c}-j_{l,c}} $, $l \in \{1,\ldots,m \}$.
\end{lemma}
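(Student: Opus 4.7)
The plan is to extend the block construction of Lemma \ref{Lm:SqJ} so that all octaves $J_1, \ldots, J_m$ are handled by a single application of Theorem \ref{Thm:Arcones}; the linear-combination structure is then absorbed into the choice of the scalar functional, bypassing the need for a separate Cram\'er--Wold step on $S$.

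First I would build a stationary Gaussian vector sequence $\{X_v\}_{v \geq 1}$ aligned with the coarsest octave $J_m$: for each $v$, let $X_v$ collect all wavelet coefficients $d_{B^H}(J_l, K)$ with $l \in \{1,\ldots,m\}$, $k_c \in \{u_l(v-1)+1, \ldots, u_l v\}$, and $k_i \in \{1,\ldots,n_{l,i}\}$ for $i \neq c$. Because $u_l = 2^{j_{m,c}-j_{l,c}}$, block $v$ occupies the same physical $c$-range for every octave, and Proposition \ref{Prop:coef}(iii) ensures stationarity. The dimension of $X_v$ is fixed, since only $T_c$ grows. Define the scalar functional
\begin{align*}
h(X_v) := \sum_{l=1}^m a_l \frac{n_{l,c}}{u_l n_{J_l}} \sum_{\substack{k_c \in \{u_l(v-1)+1, \ldots, u_l v\} \\ k_i \in \{1,\ldots,n_{l,i}\}, i \neq c}} d^2_{B^H}(J_l, K),
\end{align*}
so that $\frac{1}{n_{m,c}} \sum_{v=1}^{n_{m,c}} h(X_v) = aS$ up to at most $u_l$ boundary coefficients per octave; this boundary contributes $O(n_{m,c}^{-1})$ and so vanishes at rate $o(n_{m,c}^{-1/2})$.

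Next I would verify the hypotheses of Theorem \ref{Thm:Arcones}. Since $h(X_v)$ is a linear combination of squares of Gaussian components, Lemmas \ref{Lm:rank1} and \ref{Lm:rank2} give Hermite rank $\tau = 2$. For the summability of $r^{(i_1,i_2)}(k) = \mathbb{E}[X_1(i_1) X_{1+k}(i_2)]$, fix octaves $J_{l_1}, J_{l_2}$ and in-block positions $p_1, p_2$; then the $c$-coordinate difference is $2^{j_{l_1,c}} p_1 - 2^{j_{l_2,c}}(u_{l_2} k + p_2) = -2^{j_{m,c}} k + O(1)$, so Proposition \ref{Prop:coef}(iv) yields $|r^{(i_1,i_2)}(k)| \lesssim |k|^{2H_c - 2N}$ times a bounded constant coming from the non-$c$ coordinates. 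The hypothesis $N > \max_i H_i + 1/4$ forces $2(2H_c - 2N) < -1$, hence $\sum_k |r^{(i_1,i_2)}(k)|^2 < \infty$.

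Theorem \ref{Thm:Arcones} then yields $n_{m,c}^{-1/2} \sum_v (h(X_v) - \mathbb{E}h(X_v)) \mathop\to\limits^d N(0, \sigma^2_{aS})$ with exactly the variance stated in the lemma, since $\mathbb{E} h(X_v) = \sum_l a_l u(J_l)$ by (\ref{Eq:2law}) and the normalization chosen in $h$. The main obstacle will be the cross-octave summability check: one must verify that the apparent scale mismatch between $2^{j_{l_1,c}}$ and $2^{j_{l_2,c}}$ in the correlation formula does not spoil the decay, which works out only because $u_l = 2^{j_{m,c}-j_{l,c}}$ was chosen precisely so that the two scale factors combine into the common shift $2^{j_{m,c}} k$.
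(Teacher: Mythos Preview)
Your proposal is correct and follows essentially the same approach as the paper: both construct the blocked vector sequence $X_v$ by grouping, for each octave $l$, the wavelet coefficients with $k_c \in \{u_l(v-1)+1,\ldots,u_l v\}$ (so that a unit shift in $v$ corresponds to the common physical shift $2^{j_{m,c}}$ in the $c$-direction), define the same functional $h(X_v)$, invoke Lemma~\ref{Lm:rank2} for Hermite rank~$2$, check the square-summability of cross-correlations via Proposition~\ref{Prop:coef}(iv) under $N>\max_i H_i+1/4$, and apply Theorem~\ref{Thm:Arcones}. Your treatment is in fact slightly more explicit than the paper's in handling the boundary mismatch $n_{l,c}\approx u_l n_{m,c}$ and in spelling out why the cross-octave correlation still decays like $|k|^{2H_c-2N}$.
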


\begin{proof}
 We construct a sequence of random vectors $X_{l,s}$ from the wavelet coefficients $d_{B^H } (J_l,K)$, $K=(k_1, \ldots,  k_d)$.
\begin{align*}
X_{l,s}  = (X_{l,s} (1), \ldots ,X_{l,s} (n_{J_l} /n_{l,c} ) ): = [d_{B^H } (J_l,K)|_{k_c  = s, k_i  \in \{ 1, \ldots n_{l,i} \}, i \ne c } ]_{1\times (n_{J_l} /n_{l,c} )},\;\; 1\leq s\leq n_{l,c}.
\end{align*}
Because $J_1 < \cdots < J_m$, according to Definition \ref{def:acoff}, $n_{1,c} \geq \cdots  \geq n_{m,c}$.
One can check $u_l  \approx \frac{n_{l,c} }{n_{m,c} }$.
Without loss of generality, we suppose that $ n_{l,c} = u_l n_{m,c}$, $l \in \{1, \ldots, m \}$.

Let
\begin{align*}
X_v  = ( X_{1, u_1 (v-1)+1},\ldots, X_{1, u_1 v},\ldots, X_{m, u_m (v-1)+1},\ldots, X_{m, u_m v} )', 1\leq v \leq n_{m,c}.
\end{align*}
When $T_c \to +\infty$, $n_{m,c} \to +\infty$, using Proposition \ref{Prop:coef}, $\{ X_v \}_{v=1}^{\infty}$ is a stationary mean-zero Gaussian sequence.
Define
\begin{align*}
h(X_v ) = \sum\limits_{l = 1}^m {a_l \frac{1}{{u_l n_{J_l } /n_{l,c} }}\sum\limits_{j = 1}^{u_l } {\sum\limits_{i = 1}^{n_{J_l } /n_{l,c} } {|X_{l,u_l (v - 1) + j} (i)|^2 } } }.
\end{align*}
By Lemma \ref{Lm:rank2}, the Hermite rank of $h(X_v )$ is 2.

We rewrite $X_v$ as
\begin{align*}
X_v  = ( X_v(1),\ldots,X_v( \sum\nolimits_{l = 1}^m { u_l n_{J_l } /n_{l,c}} ) )'.
\end{align*}
So by Proposition \ref{Prop:coef}, when $N>\mathop{\max }\limits_{1 \le i \le d} H_i+1/4$, it's easy to check, for each $1\leq i_1, i_2 \leq \sum\nolimits_{l = 1}^m { u_l n_{J_l } /n_{l,c} } $,
\begin{align*}
   \sum\nolimits_{k=-\infty}^{\infty} | r^{(i_1, i_2)}(k) |^2 < \infty,
\end{align*}
where $r^{(i_1,i_2)}(k)=\mathbb{E}[X_m(i_1) X_{m+k}(i_2)]$.

Hence, by Theorem \ref{Thm:Arcones}, Lemma \ref{Lm:aSq} is proved.
\end{proof}

So by Theorem \ref{Thm:C-W} (Cram\'{e}r-Wold theorem), the asymptotic behavior of the vector $S$ can be obtained.

\begin{lemma}\label{Lm:Sq}
Under the conditions of Lemma \ref{Lm:Tnj}, $S = (S (J_1),\ldots,S(J_m))'$.
$N>\mathop{\max }\limits_{1 \le i \le d} H_i+1/4$.
For any $c \in \{1, \ldots, d \}$, $T_c \to +\infty $, we have
\begin{align}
n_{m,c } ^{1/2} [S  -  u  ]\mathop \to\limits^d  N(0,\Sigma _{S }),
\end{align}
where
\begin{align}
u  = (\mathbb{E}d^2_{B^H } (J_1 ,K ) , \ldots ,\mathbb{E}d^2_{B^H } ( J_m ,K ) ),
\end{align}
and
\begin{align}\label{Eq:sigmaS1}
\Sigma _{S }  = (\sigma ^2 _{l,h} )_{m \times m},
\end{align}
with
\begin{align}
 & \sigma _{l,h} ^2  = \frac{1}{{u_l n_{J_l } /n_{l,c} }}\frac{1}{{u_h n_{J_h } /n_{h,c} }}  [ {\mathop{\rm Cov}} \left( {\sum\limits_{\scriptstyle k_c  \in \{ 1, \ldots ,u_l \} , \hfill \atop
  \scriptstyle k_i  \in \{ 1, \ldots ,n_{l,i} \} ,i \ne c \hfill} d^2_{B^H } (J_l ,K) ,\sum\limits_{\scriptstyle k_c  \in \{ 1, \ldots ,u_h \} , \hfill \atop
  \scriptstyle k_i  \in \{ 1, \ldots ,n_{h,i} \} ,i \ne c \hfill} d^2_{B^H } (J_h ,K) } \right) \notag \\
 &+ \sum\limits_{k = 1}^\infty  {{\mathop{\rm Cov}} \left( {\sum\limits_{\scriptstyle k_c  \in \{ 1, \ldots ,u_l \} , \hfill \atop
  \scriptstyle k_i  \in \{ 1, \ldots ,n_{l,i} \} ,i \ne c \hfill} d^2_{B^H } (J_l ,K) ,\sum\limits_{\scriptstyle k_c  \in \{ u_h k + 1, \ldots ,u_h (k + 1)\} , \hfill \atop
  \scriptstyle k_i  \in \{ 1, \ldots ,n_{h,i} \} ,i \ne c \hfill} d^2_{B^H } (J_h ,K)} \right)} \notag \\
 &+ \sum\limits_{k = 1}^\infty  {{\mathop{\rm Cov}} \left( {\sum\limits_{\scriptstyle k_c  \in \{ u_l k + 1, \ldots ,u_l (k + 1)\} , \hfill \atop
  \scriptstyle k_i  \in \{ 1, \ldots ,n_{l,i} \} ,i \ne c \hfill} d^2_{B^H } (J_l ,K) ,\sum\limits_{\scriptstyle k_c  \in \{ 1, \ldots ,u_h \} , \hfill \atop
  \scriptstyle k_i  \in \{ 1, \ldots ,n_{h,i} \} ,i \ne c \hfill} d^2_{B^H } (J_h ,K) } \right)} ], \label{Eq:covSqlh}
\end{align}
$1\leq l,h \leq m $, and $u_l  =  2^{j_{m,c}-j_{l,c}}$. $n_{J_l}=  \prod\nolimits_{i=1}^d n_{l,i}$, $l \in \{1,\ldots,m \}$.
\end{lemma}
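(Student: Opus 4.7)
\textbf{Proof plan for Lemma \ref{Lm:Sq}.}

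The plan is to invoke the Cram\'{e}r--Wold device (Theorem \ref{Thm:C-W}) to reduce joint convergence of the random vector $n_{m,c}^{1/2}(S-u)$ to convergence of all of its one-dimensional projections. For an arbitrary $a=(a_1,\ldots,a_m)\in\mathbb{R}^m$, the projection is $n_{m,c}^{1/2}(aS - a u) = n_{m,c}^{1/2}\sum_{l=1}^m a_l\bigl(S(J_l)-u(J_l)\bigr)$, and Lemma \ref{Lm:aSq} already delivers a central limit theorem for exactly this object with asymptotic variance $\sigma_{aS}^2$. So the entire proof reduces to the algebraic identification
\begin{align*}
\sigma_{aS}^2 \;=\; a\,\Sigma_S\,a' \;=\; \sum_{l,h=1}^m a_l a_h\,\sigma^2_{l,h},
\end{align*}
with $\sigma^2_{l,h}$ as in (\ref{Eq:covSqlh}). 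Once this is verified, Cram\'{e}r--Wold closes the argument.

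To carry out the identification, I would first expand the variance term of $\sigma_{aS}^2$ using bilinearity of $\mathrm{Cov}$:
\begin{align*}
\mathrm{Var}\Bigl(\sum_{l=1}^m a_l Y_l\Bigr) \;=\; \sum_{l,h=1}^m a_l a_h\,\mathrm{Cov}(Y_l,Y_h),
\end{align*}
where $Y_l$ denotes the inner block sum indexed by $k_c\in\{1,\ldots,u_l\}$, $k_i\in\{1,\ldots,n_{l,i}\}$, $i\ne c$ appearing in Lemma \ref{Lm:aSq}. The $(l,h)$ coefficient produced is exactly the first (un-summed) covariance in (\ref{Eq:covSqlh}), after absorbing the $1/(u_l n_{J_l}/n_{l,c})$ and $1/(u_h n_{J_h}/n_{h,c})$ normalisations. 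Next I would expand the lag-$k$ covariance term $2\sum_{k\ge 1}\mathrm{Cov}(\cdot,\cdot)$ of $\sigma_{aS}^2$ in the same way; this produces, for each pair $(l,h)$, a contribution $\mathrm{Cov}(Y_l$ at block $v$, $Y_h$ at block $v+k)$. Here the factor $2$ in Lemma \ref{Lm:aSq} comes from using $\mathrm{Var}(\sum_v \cdot) = \mathrm{Var}(\cdot) + 2\sum_{k\ge 1}\mathrm{Cov}(\cdot,\cdot)$, which is valid because the full vector sequence $\{X_v\}$ is stationary. When we split into fixed pairs $(l,h)$ the symmetry is broken: one must keep both the covariance where $J_h$ is shifted forward by $u_h k$ and the covariance where $J_l$ is shifted forward by $u_l k$, which are precisely the second and third sums in (\ref{Eq:covSqlh}).

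The main obstacle is this bookkeeping step, namely tracking the block structure introduced by the ratios $u_l = 2^{j_{m,c}-j_{l,c}}$ (so that octave $J_l$ is coarser than octave $J_m$ along coordinate $c$) and showing that when the symmetric $2\sum_{k\ge 1}$ of Lemma \ref{Lm:aSq} is decomposed over pairs $(l,h)$ the result is the asymmetric two-sum expression in (\ref{Eq:covSqlh}). I would justify the split by using stationarity of the wavelet coefficients in $k_c$ (Proposition \ref{Prop:coef}(iii)) to rewrite $\mathrm{Cov}(Y_l\text{ at block }v+k,\,Y_h\text{ at block }v)$ as $\mathrm{Cov}(Y_l\text{ at block }v,\,Y_h\text{ at block }v-k)$, and then re-index $k\mapsto -k$. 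Absolute summability of these covariances (hence legitimacy of every rearrangement) is guaranteed by the condition $N > \max_i H_i + 1/4$ via Proposition \ref{Prop:coef}(iv) and is exactly what was already used to apply Theorem \ref{Thm:Arcones} inside Lemma \ref{Lm:aSq}. Once the identification $\sigma_{aS}^2 = a\Sigma_S a'$ is established, Theorem \ref{Thm:C-W} immediately yields $n_{m,c}^{1/2}(S-u)\to N(0,\Sigma_S)$ in distribution, completing the proof.
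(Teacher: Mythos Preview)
Your proposal is correct and follows exactly the approach the paper takes: apply the Cram\'{e}r--Wold device (Theorem \ref{Thm:C-W}) to upgrade the scalar CLT of Lemma \ref{Lm:aSq} to the vector CLT, and then identify $\sigma_{aS}^2 = a\,\Sigma_S\,a'$ by bilinear expansion. The paper in fact gives less detail than you do on the identification step, simply stating that ``$\sigma^2_{l,h}$ can be calculated by $\sigma^2_{aS}$ in Lemma \ref{Lm:aSq}'' and then separately verifying the diagonal case $\sigma^2_{l,l}=\sigma^2(J_l)/u_l$; your discussion of how the symmetric $2\sum_{k\ge 1}$ splits into the two asymmetric sums via stationarity is the right justification.
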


$\sigma _{l,h} ^2$ can be calculated by $\sigma _{aS } ^2$ in Lemma \ref{Lm:aSq}.

Note that $\sigma _{l,l} ^2$ is equal to $\sigma^2 (J_l)/u_l$ in Lemma \ref{Lm:SqJ}. In fact,
\begin{align*}
 \sigma _{l,l} ^2
  &= \frac{1}{{(u_l n_{J_l } /n_{l,c} )^2 }}[{\mathop{\rm Var}} \left( {\sum\limits_{\scriptstyle k_c  \in \{ 1, \ldots ,u_l \} , \hfill \atop
  \scriptstyle k_i  \in \{ 1, \ldots ,n_{l,i} \} ,i \ne c \hfill} {d^2_{B^H } (J_l ,K)} } \right) \\
  &+ 2\sum\limits_{k = 1}^\infty  {{\mathop{\rm Cov}} \left( {\sum\limits_{\scriptstyle k_c  \in \{ 1, \ldots ,u_l \} , \hfill \atop
  \scriptstyle k_i  \in \{ 1, \ldots ,n_{l,i} \} ,i \ne c \hfill} {d^2_{B^H } (J_l ,K) } ,\sum\limits_{\scriptstyle k_c  \in \{ u_l k + 1, \ldots ,u_l (k + 1)\} , \hfill \atop
  \scriptstyle k_i  \in \{ 1, \ldots ,n_{l,i} \} ,i \ne c \hfill} {d^2_{B^H } (J_l ,K) } } \right)} ] \\
  &= \frac{1}{{(u_l n_{J_l } /n_{l,c} )^2 }}[\sum\limits_{j = 1}^{u_l } {{\mathop{\rm Var}} \left( {\sum\limits_{k_c  = j,k_i  \in \{ 1, \ldots ,n_{l,i} \} ,i \ne c} {d^2_{B^H } (J_l ,K) } } \right)}  \\
  &+ \sum\limits_{j = 1}^{u_l } {2\sum\limits_{k = 1}^\infty  {{\mathop{\rm Cov}} \left( {\sum\limits_{k_c  = j,k_i  \in \{ 1, \ldots ,n_{l,i} \} ,i \ne c} {d^2_{B^H } (J_l ,K) } ,\sum\limits_{k_c  = j + k,k_i  \in \{ 1, \ldots ,n_{l,i} \} ,i \ne c} {d^2_{B^H } (J_l ,K) } } \right)} } ].
\end{align*}
Since the sequence of random vectors $X_v$ in Lemma \ref{Lm:SqJ} is stationary,
\begin{align*}
 \sigma _{l,l} ^2
 & = \frac{1}{{u_l (n_{J_l } /n_{l,c} )^2 }}[{\mathop{\rm Var}} \left( {\sum\limits_{k_c  = 1,k_i  \in \{ 1, \ldots ,n_{l,i} \} ,i \ne c} {d^2_{B^H } (J_l ,K)} } \right) \\
 &+ 2\sum\limits_{k = 1}^\infty  {{\mathop{\rm Cov}} \left( {\sum\limits_{k_c  = 1,k_i  \in \{ 1, \ldots ,n_{l,i} \} ,i \ne c} {d^2_{B^H } (J_l ,K) } ,\sum\limits_{k_c  = 1 + k,k_i  \in \{ 1, \ldots ,n_{l,i} \} ,i \ne c} {d^2_{B^H } (J_l ,K) } } \right)} ] \\
 &= \sigma ^2 (J_l) / u_l .
\end{align*}

By Theorem \ref{Thm:funclimit}, one gets the asymptotic behavior of the logarithm of $S$.

\begin{lemma}\label{Lm:LSq}
Under the conditions of Lemma \ref{Lm:Tnj}, $L = (\log_2 S (J_1),\ldots, \log_2 S(J_m))'$. $N>\mathop{\max }\limits_{1 \le i \le d} H_i+1/4$.
For any $c \in \{1, \ldots, d \}$, $T_c \to +\infty $, we have
\begin{align}
n_{m,c } ^{1/2} [L  -  Lu  ]\mathop \to\limits^d  N(0,\Sigma _{L }),
\end{align}
where
\begin{align}
Lu  = (\log_2 \mathbb{E}d^2_{B^H } (J_1 ,K ) , \ldots , \log_2 \mathbb{E}d^2_{B^H } ( J_m ,K ) ),
\end{align}
and
\begin{align}
\Sigma _{L }  = D' \Sigma _{S_q } D.
\end{align}
with $D = \log _2 e \cdot {\rm diag}(\left( { \mathbb{E} d^2_{B^H } (J_1 ,K) } \right)^{ - 1} , \ldots ,\left( { \mathbb{E} d^2_{B^H } (J_m ,K) } \right)^{ - 1} )$, and $\Sigma _{S }$ is defined by (\ref{Eq:sigmaS1}) and (\ref{Eq:covSqlh}).

the same as that in Lemma \ref{Lm:Sq}.
\end{lemma}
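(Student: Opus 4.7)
The plan is to derive Lemma \ref{Lm:LSq} directly from Lemma \ref{Lm:Sq} by a single application of the multivariate delta method, namely Theorem \ref{Thm:funclimit}. I take the vector-valued map $g:\mathbb{R}_+^m\to\mathbb{R}^m$ whose $l$-th component is $g_l(x_1,\ldots,x_m)=\log_2 x_l$, so that $L=g(S)$ and $Lu=g(u)$, where $u=(\mathbb{E}d^2_{B^H}(J_1,K),\ldots,\mathbb{E}d^2_{B^H}(J_m,K))$. Setting $b_n=n_{m,c}^{1/2}$ and $\mu=u$, Lemma \ref{Lm:Sq} supplies exactly the input hypothesis $b_n(S-u)\xrightarrow{d}N(\mathbf{0},\Sigma_S)$ required by Theorem \ref{Thm:funclimit}.

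Next I verify that $g$ has a nonzero differential at $\mu=u$. By Eq.(\ref{Eq:2law}), each component $u_l=\mathbb{E}d^2_{B^H}(J_l,K)=C_{B^H}2^{\sum_i j_{l,i}(2H_i+1)}$ is strictly positive, so the componentwise logarithm is smooth in a neighborhood of $u$. Computing the Jacobian, since $g_l$ depends only on $x_l$ and $\partial_{x_l}\log_2 x_l = (\log_2 e)/x_l$, I obtain
\[
D=\left[\frac{\partial g_l}{\partial x_h}\right]_{x=u}= \log_2 e\cdot\mathrm{diag}\bigl((\mathbb{E}d^2_{B^H}(J_1,K))^{-1},\ldots,(\mathbb{E}d^2_{B^H}(J_m,K))^{-1}\bigr),
\]
which matches the matrix $D$ in the statement of the lemma.

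Applying Theorem \ref{Thm:funclimit} with this $D$, $\Sigma_S$, and $b_n=n_{m,c}^{1/2}$ yields
\[
n_{m,c}^{1/2}\bigl(g(S)-g(u)\bigr)=n_{m,c}^{1/2}(L-Lu)\xrightarrow{d}N(\mathbf{0},D\Sigma_S D').
\]
Because $D$ is diagonal, $D=D'$, so the asymptotic covariance coincides with $D'\Sigma_S D=\Sigma_L$, which is the claimed limit. This concludes the argument.

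I expect essentially no technical obstacle: the whole proof is a routine one-line invocation of the delta method, and the only points that require (brief) checking are that the scaling sequence $n_{m,c}^{1/2}$ in Lemma \ref{Lm:Sq} is the same one needed by Theorem \ref{Thm:funclimit}, and that the strict positivity of each $u_l$ (forced by $C_{B^H}>0$ in Eq.(\ref{Eq:2law})) legitimizes taking the componentwise logarithm and guarantees that the Jacobian $D$ has all diagonal entries finite and nonzero, so that the differential of $g$ at $\mu$ is indeed nonzero as required by Theorem \ref{Thm:funclimit}.
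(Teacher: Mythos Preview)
Your proposal is correct and follows exactly the paper's approach: the paper derives Lemma~\ref{Lm:LSq} in one line by applying Theorem~\ref{Thm:funclimit} (the multivariate delta method) to the conclusion of Lemma~\ref{Lm:Sq} with the componentwise $\log_2$ map, precisely as you do. Your write-up actually spells out more of the verification (positivity of each $u_l$, computation of the Jacobian $D$) than the paper, which simply states ``By Theorem~\ref{Thm:funclimit}, one gets the asymptotic behavior of the logarithm of $S$.''
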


The estimator $\hat H$ denoted by Eq.(\ref{Eq:Hesti}) is linear combinations of the elements of $L$.
So the asymptotic normality of $\hat H$ follows from the asymptotic normality of $L$ by Theorem \ref{Thm:funclimit}.


\subsection{Two-Step estimator}\label{Sec:twostepesti}

We adopt the GLS estimator, which has the minimum variance in these estimators defined by (\ref{Eq:Hesti}) (see Gauss-Markov theorem \cite{WLS}).
But the covariance matrix ${\mathop{\rm Var}} L$ used by GLS estimator is a function of the unknown parameter $H$ we want to estimate.
So a two-step estimator may be feasible, namely, we first obtain the OLS estimator of $H$, and then use the OLS estimator to estimate the covariance matrix. So the GLS estimator can be obtained using the estimate of the covariance matrix.
This estimator has been used to the estimation of Hurst parameter of fBm and is proved to be asymptotically normal (see \cite{Bardet2000,Bardet2002,Morales2002thesis}).

In this subsection, we apply two-step estimator to the case of fractional Brownian sheet.

First we obtain the covariance matrix of $L$ with known $H$. By the regularity on $R_+$ of logarithm function and Theorem \ref{Thm:funclimit}, when $n_{J_h}, n_{J_l}$ is large, the $(h,l)$ element of the covariance matrix ${\mathop{\rm Var}} L $ is given by
\begin{align}
 {\rm{Cov}}(\log _2 S (J_h ),\log _2 S (J_l )) &\approx \frac{{{\rm{(log}}_2 {\rm{e)}}^2 {\rm{Cov}}(S (J_h ),S (J_l ))}}{{\mathbb{E}d^2 _{B^H } (J_h ,K)\mathbb{E}d^2 _{B^H } (J_l ,K)}} \notag \\
  &= {\rm{(log}}_2 {\rm{e)}}^2 \frac{{\sum\limits_{K_l } {\sum\limits_{K_h } {{\rm{Cov}}(d^2 _{B^H } (J_h ,K_h ),d^2 _{B^H } (J_l ,K_l ))} } }}{{n_{J_h } n_{J_l } \mathbb{E}d^2 _{B^H } (J_h ,K)\mathbb{E}d^2 _{B^H } (J_l ,K)}} \notag \\
  &= {\rm{(log}}_2 {\rm{e)}}^2 \frac{{\sum\limits_{K_l } {\sum\limits_{K_h } {{\rm{2Cov}}^2 (d_{B^H } (J_h ,K_h ),d_{B^H } (J_l ,K_l ))} } }}{{n_{J_h } n_{J_l } \mathbb{E}d^2 _{B^H } (J_h ,K)\mathbb{E}d^2 _{B^H } (J_l ,K)}}. \text{(by Lemma \ref{Lm:covXY2})} \label{Eq:GLScov}
\end{align}
Consider ${\rm{Cov}}(d_{B^H } (J_h ,K_h ),d_{B^H } (J_l ,K_l ))$, by Eq.(\ref{Eq:P1cov}),
\begin{align}
{\rm{Cov}}(d_{B^H } (J_h ,K_h ),d_{B^H } (J_l ,K_l )) = \prod\limits_{i = 1}^d {\int_R {\int_R {\psi _{j_{h.i} ,k_{h,i} } (t)\psi _{j_{l.i} ,k_{l,i} } (s)( - |t - s|^{2H_i} /2)dtds} } },  \label{Eq:GLSdcov}
\end{align}
where $J_l=(j_{l,1},\ldots,j_{l,d}) \in \mathbb{Z}^d_+$, $J_h=(j_{h,1},\ldots,j_{h,d}) \in \mathbb{Z}^d_+$, $K_l=(k_{l,1},\ldots,k_{l,d}) \in \mathbb{Z}^d_+$ and $K_h=(k_{h,1},\ldots,k_{h,d}) \in \mathbb{Z}^d_+$, $l,h\in \{1,\ldots,m \}$.

Using Eq.(\ref{Eq:GLSdcov}), ${\rm{Cov}}(d_{B^H } (J_h ,K_h ),d_{B^H } (J_l ,K_l ))$ can be calculated via two-dimensional wavelet transform of function $ g(t,s)= - |t - s|^{2H_i} /2$ for each $i \in \{1,\ldots,d \}$.
$\mathbb{E}d^2 _{B^H } (J_h ,K)$ and $\mathbb{E}d^2 _{B^H } (J_l ,K)$ can be calculated similarly.

When $H$ is known, the covariance matrix of $L$ can be calculated approximately.
We denote the approximately calculated covariance matrix by $G(H)$, which is a function of $H$.

In view of Eq.(\ref{Eq:GLScov}) and Eq.(\ref{Eq:GLSdcov}), $G(H)$ is a positive definite symmetric matrix and differentiable in $H$ for $H \in (0,1)^d$. 
So $G(H)^{-1}$  exists and is continuous in $H$ for $H \in (0,1)^d$.
Now, we can write the two-step estimator,
\begin{align}\label{Eq:TSesti}
 & \hat \alpha = (A' G(\hat H_o)^{-1} A)^{ - 1} A G(\hat H_o)^{-1} L, \notag \\
 & \hat H_{og}  = \hat \alpha/2  - 1/2,
\end{align}
where $\hat H_o$ is the OLS estimator.

\begin{theorem}\label{Thm:TSesti}
The two-step estimator $\hat H_{og}$ denoted by Eq.(\ref{Eq:TSesti}) is asymptotically normal, and has the same asymptotic covariance as the GLS estimator.
\end{theorem}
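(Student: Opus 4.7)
The plan is to treat $\hat H_{og}$ as a feasible GLS estimator and apply the standard Slutsky-plus-continuous-mapping argument, so that the only thing requiring verification is that the estimated weight matrix $G(\hat H_o)^{-1}$ can be substituted for the deterministic optimal weight matrix $G(H)^{-1}$ without changing the limiting distribution.

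First, I would argue consistency of the OLS estimator $\hat H_o$. Applying Theorem \ref{Thm:Hmain} with $\Sigma = I_{m \times m}$ yields
\begin{align*}
n_{m,c}^{1/2} (\hat H_o - U) \mathop\to\limits^d N(0, \Sigma_{\hat H_o}),
\end{align*}
so in particular $\hat H_o \to H$ in probability as $T_c \to +\infty$. Because $G(H)$ is positive definite and, by the explicit formulas (\ref{Eq:GLScov})--(\ref{Eq:GLSdcov}), continuous in $H$ on $(0,1)^d$, the inverse $G(\cdot)^{-1}$ is also continuous on $(0,1)^d$. The continuous mapping theorem then gives
\begin{align*}
G(\hat H_o)^{-1} \mathop\to\limits^{\mathbb{P}} G(H)^{-1}.
\end{align*}
By the same argument, the matrix-valued function $M \mapsto (A' M A)^{-1} A' M$ is continuous at $M = G(H)^{-1}$, so
\begin{align*}
(A' G(\hat H_o)^{-1} A)^{-1} A' G(\hat H_o)^{-1} \mathop\to\limits^{\mathbb{P}} (A' G(H)^{-1} A)^{-1} A' G(H)^{-1}.
\end{align*}

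Next I would combine this with the joint limit for $L$. Lemma \ref{Lm:LSq} gives
\begin{align*}
n_{m,c}^{1/2} (L - Lu) \mathop\to\limits^d N(0, \Sigma_L),
\end{align*}
and the definition (\ref{Eq:TSesti}) of $\hat H_{og}$ can be rewritten as
\begin{align*}
n_{m,c}^{1/2}(\hat H_{og} - U) = \tfrac{1}{2} (A' G(\hat H_o)^{-1} A)^{-1} A' G(\hat H_o)^{-1} \cdot n_{m,c}^{1/2}(L - Lu),
\end{align*}
using $L u = A \alpha$ where $\alpha = 2U + (1,\dots,1,0)'$ in the natural sense. Slutsky's theorem, applied to the product of a matrix converging in probability and a vector converging in distribution, then yields
\begin{align*}
n_{m,c}^{1/2}(\hat H_{og} - U) \mathop\to\limits^d N\bigl(0,\, \tfrac{1}{4} (A' G(H)^{-1} A)^{-1} A' G(H)^{-1} \Sigma_L G(H)^{-1} A (A' G(H)^{-1} A)^{-1} \bigr).
\end{align*}

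Finally, since $G(H)$ is by construction the asymptotic covariance of $L$ (i.e. $G(H) = \Sigma_L$ in the sense of Remark \ref{Rmk:sigma}), the sandwich collapses to $\tfrac{1}{4}(A' \Sigma_L^{-1} A)^{-1}$, which is exactly the asymptotic covariance of the infeasible GLS estimator obtained by plugging $\Sigma = \Sigma_L / n_{m,c}$ into (\ref{Eq:sigmaH}). The only real subtlety I anticipate is justifying the identification $G(H) = \Sigma_L$ up to an irrelevant scalar: the expression (\ref{Eq:GLScov}) is derived under a finite-sample delta-method-style approximation of $\mathrm{Cov}(\log_2 S(J_h), \log_2 S(J_l))$, so one must check that the error in this approximation vanishes fast enough (again using Theorem \ref{Thm:funclimit} applied componentwise with $g = \log_2$) to be absorbed into the Slutsky step. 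Beyond that point, asymptotic equivalence with the GLS estimator is immediate from the forms of the two limiting covariance matrices.
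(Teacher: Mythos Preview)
Your approach is essentially the same as the paper's: both argue that $\hat H_o \to H$ in probability (from Theorem~\ref{Thm:Hmain} with $\Sigma=I$), deduce $G(\hat H_o)^{-1}\to G(H)^{-1}$ in probability by continuity of $G(\cdot)^{-1}$, and then identify $G(H)$ with the asymptotic covariance $\Sigma_L/n_{m,c}$ of $L$ so that the sandwich collapses to the GLS limit. If anything, your explicit Slutsky/continuous-mapping step is more careful than the paper's one-line appeal to Theorem~\ref{Thm:Hmain} for the asymptotic normality of $\hat H_{og}$, since that theorem as stated only treats deterministic weight matrices $\Sigma$; your version makes transparent why substituting the random $G(\hat H_o)$ does not affect the limit.
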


\begin{proof}
The asymptotic normality follow from Theorem \ref{Thm:Hmain}.

By Eq.(\ref{Eq:GLScov}) and Eq.(\ref{Eq:GLSdcov}), it's easy to check that $G(H)$ is a positive symmetric matrix and differentiable in $H$ for $H \in (0,1)$. And from Theorem \ref{Thm:Hmain}, for any $c \in \{1, \ldots, d \}$, $\hat H_o \mathop {\mathop  \to \limits_{T_c  \to \infty } }\limits^P H$. So we have
\begin{align}\label{Eq:TSThm1}
G(\hat H_o )^{ - 1} \mathop {\mathop  \to \limits_{T_c  \to \infty } }\limits^P G(H)^{ - 1}.
\end{align}

Another, $G(H)= D' {\rm Var(S)} D$, where $D = \log _2 e \cdot {\rm diag}(\left( { \mathbb{E} d^2_{B^H } (J_1 ,K) } \right)^{ - 1} , \ldots ,\left( { \mathbb{E} d^2_{B^H } (J_m ,K) } \right)^{ - 1} )$, $S = (S (J_1),\ldots,S(J_m))'$.
By Lemma \ref{Lm:Sq}, ${\rm Var(S)} \mathop  \to \limits_{T_c  \to \infty } \Sigma _{S }/n_{m,c}$. So
\begin{align}\label{Eq:TSThm2}
G(H) \mathop  \to \limits_{T_c  \to \infty } \Sigma _{L }/n_{m,c}.
\end{align}
$\Sigma _{L } / n_{m,c} $ is the asymptotic covariance of $L$.

Combining (\ref{Eq:TSThm1}) and (\ref{Eq:TSThm2}), the two-step estimator has the same asymptotic covariance as the GLS estimator.
\end{proof}

\section{Simulation results and discussion} \label{Sec:simulation}

In this section, we use the estimator $\hat H_{og}$ to estimate the Hurst parameter of generated fBs.
The data we test are generated by the method of circulant embedding of the covariance matrix \cite{FBSsim9,FBSsim4}.

\paragraph{Selection of parameters}
Before the estimation, the octaves $J$ and the number of vanishing moments $N$ (or wavelet) must be chosen.
For octaves $J=(j_1,\cdots,j_d)$, we choose all the octaves $J$ that satisfy $J_1 \leq J \leq J_m$, $J_1=(j_{1,1} , \cdots ,j_{1,d} )$ is the lower bound of $J$, $J_m=(j_{m,1} , \cdots ,j_{m,d} )$ is the lower bound of $J$.
Based on prior studies \cite{FBMwave7,FBMwave5}, for each $i\in \{1, \ldots, d \}$, $j_{m,i}$ is chosen the largest possible given the data length of dimension $i$.
$j_{1,i}$ is chosen $3$ by the minimum mean square error (MSE).
It is known that there is a bias-variance trade-off for choosing octaves: the selection of small octave increase the bias, but decrease the variance of the estimator.
The MSE allows the tradeoff between variance and bias.

We choose the classical Daubechies wavelet, which are orthonormal and have compact support.
From Theorem \ref{Thm:Hmain}, the number of vanishing moments must be chosen $N \geq 2$.
An increase in the number of vanishing moments $N$ comes with an enlargement of the compact support ($[-M,M]$)\cite{FBMwave7}.
In the case of finite time, this will lead to the decrease of the number of available wavelet coefficients $n_J$ at each octaves.
So we choose $N=3$.

\paragraph{Estimation performance}
The estimator $\hat H_{og}$ for each $H$ is applied to 500 independent copies of size $512\times 512$ for the two-dimensional case and 100 independent copies of size $256\times 256\times 256$ for three-dimensional case. All the results are shown in Tab.\ref{Tab:estimation}.

Tab.\ref{Tab:estimation} shows that both $\hat H_o$ and $\hat H_{og}$ have small MSE.
For all the estimated $H$, the means of $\hat H_o$ and $\hat H_{og}$ are both less than $H$, and the two estimators have the similar bias, but the variance and MSE of $\hat H_{og}$ are less than that of $\hat H_o$.
Besides, we can also find that the smaller the $H_i$ is, the larger the bias and MSE are.

\paragraph{Discussion}
From our simulation results, we can conclude that both two-step estimator $\hat H_{og}$ and OLS estimator $\hat H_o$ perform very well in the sense of small MSE.

It can be also seen that  $\hat H_{og}$ perform better than $\hat H_o$ because $\hat H_{og}$ has smaller variance than $\hat H_o$,
which is consistent with  the theoretical result we have discussed in Section \ref{Sec:twostepesti}.

Both estimators are not unbiased because
\begin{eqnarray*}
\mathbb{E}\log _2 S(J) \ne \log _2 \mathbb{E}S(J) = \log _2 \mathbb{E}d_{B^H } ^2 (J,K).
\end{eqnarray*}
and all the theorems are based on the case of continuous time, but the data we used is discrete,
the means of both estimators are less than truth parameters as we can see the simulation results.
The bias becomes larger as the value of $H$ becomes smaller.
For $H_i \geq 0.5$, the bias of $\hat H_i$ is small and can be ignored.
For $H_i < 0.5$, the bias of $\hat H_i$ is obvious but still small.
The estimator for $H_i \geq 0.5$ performs better than that for $H_i < 0.5$.

\begin{table}[htbp]
\centering
\caption{Estimation performance for $\hat H_o$ and $\hat H_{og}$}
\begin{tabular}{cc|ccc|ccc}
\hline
           &            &    \multicolumn{ 3}{|c}{$\hat H_o$} & \multicolumn{ 3}{|c}{$\hat H_{og}$} \\
 Dimension &       $H$ &       Mean &        Std &       RMSE &       Mean &        Std &       RMSE \\
\hline
\multicolumn{ 1}{c}{2D} & $H_1=0.3$ &    0.2733  &    0.0297  &    0.0399  &    0.2706  &    0.0144  &    0.0327  \\
\multicolumn{ 1}{c}{} & $H_2=0.3$ &    0.2732  &    0.0291  &    0.0396  &    0.2705  &    0.0139  &    0.0326  \\
\hline
\multicolumn{ 1}{c}{2D} & $H_1=0.5$ &    0.4897  &    0.0301  &    0.0318  &    0.4911  &    0.0150  &    0.0174  \\
\multicolumn{ 1}{c}{} & $H_2=0.5$ &    0.4855  &    0.0322  &    0.0353  &    0.4889  &    0.0155  &    0.0191  \\
\hline
\multicolumn{ 1}{c}{2D} & $H_1=0.8$ &    0.7915  &    0.0323  &    0.0334  &    0.7958  &    0.0165  &    0.0170  \\
\multicolumn{ 1}{c}{} & $H_2=0.8$ &    0.7917  &    0.0344  &    0.0354  &    0.7961  &    0.0171  &    0.0176  \\
\hline
\multicolumn{ 1}{c}{2D} & $H_1=0.3$ &    0.2731  &    0.0303  &    0.0405  &    0.2708  &    0.0140  &    0.0324  \\
\multicolumn{ 1}{c}{} & $H_2=0.5$ &    0.4865  &    0.0312  &    0.0340  &    0.4892  &    0.0146  &    0.0181  \\
\hline
\multicolumn{ 1}{c}{2D} & $H_1=0.3$ &    0.2736  &    0.0324  &    0.0418  &    0.2719  &    0.0153  &    0.0320  \\
\multicolumn{ 1}{c}{} & $H_2=0.8$ &    0.7931  &    0.0296  &    0.0304  &    0.7980  &    0.0150  &    0.0151  \\
\hline
\multicolumn{ 1}{c}{2D} & $H_1=0.5$ &    0.4878  &    0.0339  &    0.0360  &    0.4901  &    0.0164  &    0.0192  \\
\multicolumn{ 1}{c}{} & $H_2=0.8$ &    0.7911  &    0.0337  &    0.0348  &    0.7971  &    0.0152  &    0.0155  \\
\hline
\multicolumn{ 1}{c}{2D} & $H_1=0.7$ &    0.6886  &    0.0349  &    0.0367  &    0.6937  &    0.0174  &    0.0185  \\
\multicolumn{ 1}{c}{} & $H_2=0.8$ &    0.7902  &    0.0334  &    0.0347  &    0.7959  &    0.0169  &    0.0174  \\
\hline
\multicolumn{ 1}{c}{} & $H_1=0.6$ &    0.5929  &    0.0171  &    0.0184  &    0.5943  &    0.0085  &    0.0102  \\
\multicolumn{ 1}{c}{3D} & $H_2=0.7$ &    0.6942  &    0.0172  &    0.0181  &    0.6973  &    0.0080  &    0.0084  \\
\multicolumn{ 1}{c}{} & $H_3=0.8$ &    0.7955  &    0.0152  &    0.0158  &    0.7970  &    0.0087  &    0.0091  \\
\hline
\multicolumn{ 1}{c}{} & $H_1=0.7$ &    0.6941  &    0.0186  &    0.0195  &    0.6967  &    0.0095  &    0.0100  \\
\multicolumn{ 1}{c}{3D} & $H_2=0.8$ &    0.7931  &    0.0181  &    0.0193  &    0.7959  &    0.0089  &    0.0098  \\
\multicolumn{ 1}{c}{} & $H_3=0.9$ &    0.8962  &    0.0167  &    0.0171  &    0.8980  &    0.0105  &    0.0106  \\
\hline
\end{tabular}

\footnotesize The estimation performance for OLS estimator is given on the left, and for two-step estimator is given on the right.
Std denotes standard deviation, RMSE denotes root of MSE.
These values are obtained from 500 realizations of generated fBs of size $512\times 512$, or 100 realizations of generated fBs of size $256\times 256\times 256$.
It is shown that the means of $\hat H_o$ and $\hat H_{og}$ are both less than $H$, and the two estimators have the similar bias, but the Std and RMSE of $\hat H_{og}$ are less than that of $\hat H_o$ (for more detail see Section \ref{Sec:simulation}).
\label{Tab:estimation}
\end{table}

\section{Conclusions} \label{Sec:conclusion}
In this paper, we proposed a class of statistical estimators $\hat H =(\hat H_1, \ldots, \hat H_d)$ for the Hurst parameters $H=(H_1, \ldots, H_d)$ of fractional Brownian field via multidimensional wavelet analysis and least squares.
The proposed estimators are based on the wavelet expansion of fractional Brownian sheet and a regression on the log-variance of wavelet coefficient.
We also prove that our estimators $\hat H$ are asymptotically normal following the approach in \cite{Morales2002thesis}.
The convergence rate of $\hat H$ indicates that the estimation accuracy of $\hat H$ can be improved by increasing the square root of sample volume $\sqrt{\prod\nolimits_{i=1}^d {T_i}}$.
The main difficulty of our proof is the asymmetric normality of linear combination of $S$ (see Lemma \ref{Lm:aSq}), which is achieved by the construction of the sequence of random vectors $\{ X_v \}_{v=1}^{\infty}$ from wavelet coefficients of fBs and the function $h(X_v)$ that satisfy the condition of Theorem \ref{Thm:Arcones}.
Using the two-step procedure, we realize the generalized least squares (GLS) estimator, which has the lowest variance of $\hat H$.
We simulated some fractional Brownian sheets and used them to validate the two-step estimator.
We find that when $H_i \geq 1/2$, the estimators are efficient, and when $H_i < 1/2$, there are some bias.

The computation code of our method is available upon request.
%

\ifCLASSOPTIONcaptionsoff
  \newpage
\fi



\bibliographystyle{IEEEtran}
\bibliography{wavebibfileTIT}
\end{document}